\newcommand{\Paragraph}[1]{\smallskip\noindent{\bf #1.}}
\definecolor{mybg}{rgb}{0.82,0.91,0.84}
\begin{document}


\title{Storing and Querying Large-Scale Spatio-Temporal Graphs with
High-Throughput Edge Insertions}

\numberofauthors{1}
\author{
\alignauthor Mengsu Ding\hspace{1in}Muqiao
Yang$^\ddagger$\hspace{1in}Shimin Chen\titlenote{Shimin Chen is the
corresponding author.  Muqiao Yang contributed to the work as a summer
intern at ICT, CAS.  
}\\
\vspace{0.05in}
\begin{tabular}{c@{\hskip 0.3in}c}
\affaddr{SKL of Computer Architecture, ICT, CAS} &
\affaddr{$^\ddagger$Department of Electrical and Computer Engineering} \\
\affaddr{University of Chinese Academy of Sciences} &
\affaddr{Carnegie Mellon University} \\
\end{tabular}
}

\maketitle


\begin{abstract}

Real-world graphs often contain spatio-temporal information and evolve
over time.  Compared with static graphs, spatio-temporal graphs have
very different characteristics, presenting more significant challenges
in data volume, data velocity, and query processing.
In this paper, we describe three representative applications to
understand the features of spatio-temporal graphs.  Based on the
commonalities of the applications, we define a formal spatio-temporal
graph model, where a graph consists of location vertices, object
vertices, and event edges.  Then we discuss a set of design goals to
meet the requirements of the applications: (i) supporting up to 10
billion object vertices, 10 million location vertices, and 100
trillion edges in the graph, (ii) supporting up to 1 trillion new
edges that are streamed in daily, and (iii) minimizing cross-machine
communication for query processing.  
We propose and evaluate PAST, a framework for efficient
\underline{PA}rtitioning and query processing of
\underline{S}patio-\underline{T}emporal graphs.  Experimental results
show that PAST successfully achieves the above goals.  It improves
query performance by orders of magnitude compared with
state-of-the-art solutions, including JanusGraph, Greenplum, Spark and
ST-Hadoop.


\end{abstract}


\section{Introduction}
\label{sec:intro}


Graphs have been widely used to represent real-world entities and
relationships.  Real-world graphs often contain spatio-temporal
information generated by a wide range of hardware devices (e.g.,
sensors, POS machines, traffic cameras, barcode scanners) and software
systems (e.g., web servers).  We describe three representative
application scenarios in the following.

\Paragraph{Application 1: Customer Behavior Tracking and Mining}
Understanding customer behaviors is helpful for detecting fraud and
providing personalized services.  For example, credit card companies
track customers' credit card uses for fraud detection.  Internet
companies track users' browsing behaviors to achieve personalized
recommendations.  In a customer behavior tracking and mining
application, people (customers) and locations can be modeled as graph
vertices, while an edge linking a person vertex to a location vertex
represents the event that the person visits the location at a certain
time, as shown in Fig.\cref{fig:app1}.  This forms a spatio-temporal
graph.  People visiting similar locations at similar timestamps often
have similar personal interests.  In other words, it is desirable to
discover groups of people vertices that have similar edge structures
in the spatio-temporal graphs.

\begin{figure}[t]
  \centering
  \includegraphics[width=3in]{./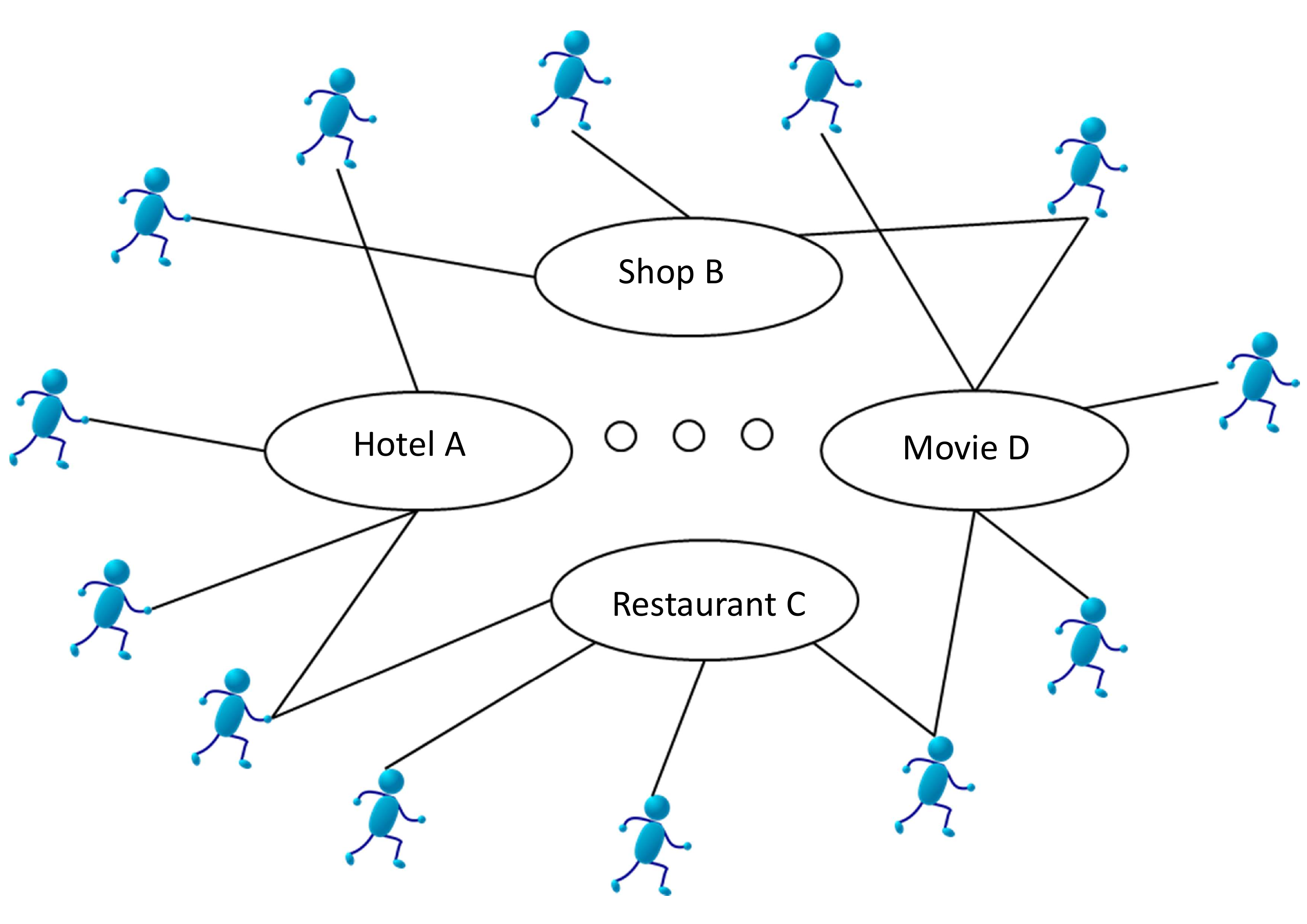}
  \vspace{-10pt}
  \caption{Customer behavior tracking and mining.}
  \label{fig:app1}
  \vspace{-0.2in}
\end{figure}

\Paragraph{Application 2: Clone-Plate Car Detection} A clone-plate car
displays a fake license plate that has the same license number as
another car of the same make and model.  In this way, the owner can
avoid annual registration and insurance fees, and/or purchase the car
without going through license plate lottery (which is a measure to
address the traffic jam problems in major cities in China).  However,
it is difficult to detect a clone plate since a query to the car
registration database will return a valid result.  
A promising approach is to exploit the large number of traffic cameras
on high ways and local roads to detect clone plates.  As a car passes
by a traffic camera, the camera takes a photo and automatically
recognizes the license number on the car plate. Car plates and cameras
can be modeled as vertices in a spatio-temporal graph. An edge
connecting a car plate vertex to a camera vertex indicates that the
camera records the car plate at a certain time.  Then, a clone plate
is detected if a car plate vertex has two edges whose locations are
far apart but timestamps are close such that it is impossible for the
car to cover the distance in such a short period of time.

\Paragraph{Application 3: Shipment Tracking} Recent years have seen
rapid growth of the shipping business.  E-commerce sites, such as
Amazon and Alibaba, have become increasingly popular.  Customers place
orders online and the ordered goods are delivered to their door steps
by shipping companies.  Each shipment package contains a barcode.
Shipping companies track packages by scanning the barcodes on packages
with barcode scanners in regional or local offices.  The problem of
shipment tracking can be modeled as a spatio-temporal graph.  Packages
and barcode scanners are represented as vertices.  The event that a
barcode scanner detects a package is recorded as an edge between the
package vertex and the scanner vertex.  In this way, the
spatio-temporal graph can be used to track shipment and answer
status-checking queries.

\Paragraph{Challenges for Storing and Querying Spatio-Temporal Graphs}
Based on the commonalities of the three applications, we define a
formal spatio-temporal graph model, where a graph consists of location
vertices (e.g., locations, cameras, barcode scanners), object vertices
(e.g., people, car plates, packages), and event edges that connect
them.  Compared with static graphs, spatio-temporal graphs pose more
significant challenges in data volume, data velocity, and query
processing:

\begin{list}{\labelitemi}{\setlength{\leftmargin}{5mm}\setlength{\itemindent}{0mm}\setlength{\topsep}{0.5mm}\setlength{\itemsep}{0mm}\setlength{\parsep}{0.5mm}}
\item \emph{Data Volume: $\sim$10 billion object vertices, $\sim$10
million location vertices, and $\sim$100 trillion edges.}
First, the requirement of $\sim$10 billion object vertices is based on
the fact that there are about 7.5 billion people in the world.
Second, according to booking.com, there are about 1.9 million hotels
and other accommodations in the world.  Suppose there are 5 times more
shops, restaurants, and theatres than hotels.  Then the total number
of locations can be on the order of 10 million.  Finally, suppose an
object vertex sees about 10 new edges per day.  If we store the edges
generated in the recent 3 years in the graph, there can be about 100
trillion edges in the graph.  The data volume of a spatio-temporal
graph can be much larger than static graphs.  


\item \emph{Data Velocity: up to 1 trillion new edges per day.}
Compared with static graphs, spatio-temporal graphs must support a
large number of new edges per day.  Suppose there are an average of 10
new edges per object vertex daily, and peak cases (e.g., Black Friday
shopping) can see ten times more activities.  This leads to a peak
velocity of 100 new edges per object vertex daily, i.e. 1 trillion new
edges in total per day.  


\item \emph{Query Processing: prohibitive communication cost.} The
data volume and velocity challenges entail a distributed solution that
stores the graph data on a number of machines.  However, without
careful designs, processing queries can easily lead to significant
cross-machine communication.  For example, in order to find subgroups
of people with similar interests (i.e. visiting similar locations at
similar time), it is necessary to combine the spatial information in
location vertices, the temporal information in edges, and properties
of person vertices in the computation.  As the data volume is huge,
the cross-machine communication cost can be prohibitively high.  

\end{list}

\noindent However, existing graph partitioning solutions~\cite{Metis,
Chaco, Scotch, DBLP:journals/pvldb/VermaLSG17} focus mostly on static
graphs, trying to minimize the number of cut edges across partitions
and balance the partition sizes.  Unfortunately, these solutions do
not take into account spatio-temporal characteristics, which are
important for query processing in spatio-temporal graphs.  On the
other hand, storing and querying spatio-temporal graphs using
state-of-the-art distributed graph database systems (e.g.,
JanusGraph~\cite{JanusGraph}), MPP relational database systems (e.g.,
Greenplum~\cite{Greenplum}), big data analytics systems (e.g.,
Spark~\cite{Spark}), or Hadoop enhanced for spatio-temporal data (i.e.
ST-Hadoop~\cite{DBLP:journals/pvldb/AlarabiM17}) result in poor
performance (cf. Section~\cref{sec:evaluate}).

\Paragraph{Our Solution: PAST}
In this paper, we propose and evaluate PAST, a framework for efficient
\underline{PA}rtitioning and query processing of
\underline{S}patio-\underline{T}emporal graphs.  We propose
diversified partitioning for location vertices, object vertices, and
edges.  We exploit the multiple replicas of edges to design
spatio-temporal partitions and key-temporal partitions.  Then we
devise a high-throughput edge ingestion algorithm and optimize the
processing of spatio-temporal graph queries.  Experimental results
show that PAST can successfully address the above challenges.  It
improves query performance by orders of magnitude compared to
state-of-the-art solutions, including JanusGraph, Greenplum, Spark,
and ST-Hadoop.

\Paragraph{Contributions} The contributions of our work are threefold:
\begin{list}{\labelitemi}{\setlength{\leftmargin}{5mm}\setlength{\itemindent}{0mm}\setlength{\topsep}{0.5mm}\setlength{\itemsep}{0mm}\setlength{\parsep}{0.5mm}}
  \item We define a formal model for spatio-temporal graphs, and
examine the design goals and challenges for storing and querying
spatio-temporal graphs.
  \item We propose PAST, a framework for efficient
\underline{PA}rtitioning and query processing of
\underline{S}patio-\underline{T}emporal graphs.  It consists of a
number of interesting features: (i) diversified partitioning for
different vertex types and different edge replicas; (ii) graph storage
with compression to reduce storage space consumption; (iii) a
high-throughput graph ingestion algorithm to meet the challenge of
data velocity; (iv) efficient query processing that leverages the
different graph partitions; and (v) a cost model to choose the best
graph partitions for query evaluation.
  \item We compare the efficiency of PAST with state-of-the-art
solutions, including JanusGraph, GreenPlum, Spark, and ST-Hadoop in
our experiments.  We design a benchmark based on the query workload of
the representative applications.  Experimental results show that PAST
achieves orders of magnitude better performance than state-of-the-art
solutions.
\end{list}

\Paragraph{Paper Organization} The remainder of the paper is organized
as follows. Section~\cref{sec:related} reviews related literature.
Section~\cref{sec:def} presents a formal definition of spatio-temporal
graphs.  Section~\cref{sec:sys} overviews the system architecture of
PAST.  Section~\cref{sec:core} elaborates the partitioning and storage
scheme for spatio-temporal graphs.  Section~\cref{sec:stream} describes
the high-throughput edge ingestion support in PAST.
Section~\cref{sec:optimize} presents optimizations on spatio-temporal
query processing.  Section~\cref{sec:evaluate} presents experimental
results.  Section~\cref{sec:discussion} discusses several interesting
issues.  Finally, Section \cref{sec:conclusion} concludes the paper.

\section{Related Work}
\label{sec:related}

We consider four solutions in the context of spatio-temporal graphs in
Section~\cref{subsec:existing}, and discuss more related work in
Section~\cref{subsec:other-related}.

\subsection{Supporting Spatio-Temporal Graphs with Existing Solutions}
\label{subsec:existing}

\Paragraph{Distributed Graph Database Systems} Graph database systems (e.g.,
JanusGraph~\cite{JanusGraph}, Titan~\cite{Titan}, Neo4j~\cite{Neo4j},
SQLGraph~\cite{DBLP:conf/sigmod/SunFSKHX15},
ZipG~\cite{DBLP:conf/sigmod/KhandelwalYY0S17}) often support the property graph
data model. We consider distributed graph database systems (e.g.,
JanusGraph~\cite{JanusGraph}, Titan~\cite{Titan}) for supporting
spatio-temporal graphs. They store vertices and edges in Key-Value stores
(e.g., Cassandra~\cite{Cassandra}, HBase~\cite{HBase}) and exploit search
platforms (e.g., Elasticsearch~\cite{Elasticsearch}, Solr~\cite{Solr}) as
indices for selective data accesses. Simple graph traversal queries can be
efficiently handled. However, they are inefficient for large-scale
spatio-temporal graphs because (i) they do not support direct filtering on time
or spatial ranges, which are frequently used in spatio-temporal queries, and
(ii) they need to scan the entire graph then invoke big data analytics systems
(e.g., Spark~\cite{Spark}) for complex queries, which incurs huge I/O overhead.

\Paragraph{MPP Relational Database Systems} We can store graph vertices and
edges as relational tables in MPP database systems (e.g.,
Greenplum~\cite{Greenplum}), and use SQL for querying spatio-temporal graphs.
Greenplum supports partitioning on multiple subsequent dimensions. The data is
first partitioned by the first partition dimension. Then the second partition
dimension is applied to each first-level partition to obtain a set of
second-level partitions, so on and so forth. The partitions in all dimensions
form a tree. MPP database systems can be inefficient for spatio-temporal
graphs because (i) there is no support for spatial partitions, and (ii) any
query has to start at the root-level and follow the tree even if the query does
not have filtering predicates on the first partition dimension, potentially
incurring significant CPU, disk I/O, and communication overhead.

\Paragraph{Big-data Analytics Systems} General-purpose big-data analytics
systems (e.g., Hadoop~\cite{Hadoop}, Spark~\cite{Spark}) support large-scale
computation on data stored in underlying storage systems, such as distributed
file systems (e.g., HDFS), distributed key-value stores (e.g.,
Cassandra~\cite{Cassandra}). We can store the spatio-temporal graphs in the
underlying storage systems, and run computation jobs on big-data analytics
systems for querying spatio-temporal graphs. However, the system needs to load
the entire graph before processing. It is not suitable for simple graph
traversal queries. Since there is no support for filtering on time or spatial
ranges on the underlying graph data, complex queries can see large unnecessary
overhead due to reading the entire graph.

\Paragraph{ST-Hadoop} ST-Hadoop~\cite{DBLP:journals/pvldb/AlarabiM17} is an
extension to Hadoop~\cite{Hadoop} and SpatialHadoop~\cite{EldawyM15}. It
represents previous studies that exploit multi-dimensional indices for
supporting spatio-temporal data~\cite{DBLP:conf/icde/PapadiasTKZ02,
DBLP:conf/sigmod/NgC04, DBLP:conf/vldb/RaseticSEN05, DBLP:conf/icde/AhmadN08,
DBLP:conf/icde/Lu0J11}, and provides a scalable solution when data volume is
large. ST-Hadoop organizes data into two-level indices. The first level is
based on the temporal dimension, while the second level builds spatial
indices. In this way, ST-Hadoop reduces data accessed for queries with
temporal and spatial range predicates, thereby achieving better performance
than Hadoop and SpatialHadoop. However, there are several disadvantages of
ST-Hadoop for supporting spatio-temporal graphs. First, ST-Hadoop sacrifices
storage for query performance. ST-Hadoop replicates its two-level indices into
multiple layers with different temporal granularities (e.g., day, month, year).
In each layer, the whole data set is replicated and partitioned. Second,
ST-Hadoop is inefficient in supporting streaming data ingestion. It needs to
sample across \emph{all data} to estimate the data distribution and compute
temporal and spatial boundaries. This basically requires temporarily storing
incoming data and then periodically shuffles the data to build the indices.
Depending on the temporal granularity, this may incur large temporal storage
overhead and huge bursts of computation. Third, there is no support for
indexing vertex IDs. Consequently, simple graph traversal queries may incur
significant I/O overhead for reading a large amount of data. Finally,
ST-Hadoop is based on the MapReduce framework, where the intermediate results
are written to disks, potentially incurring significant disk I/O overhead.

\subsection{Other Related Work}
\label{subsec:other-related}

There are a large number of static graph partitioning algorithms in the
literature, such as METIS~\cite{DBLP:journals/siamsc/KarypisK98},
Chaco~\cite{Chaco, DBLP:conf/sc/HendricksonL95},
Scotch~\cite{DBLP:conf/hpcn/PellegriniR96},
PMRSB~\cite{DBLP:conf/sc/Barnard95},
ParMetis~\cite{DBLP:journals/jpdc/KarypisK98},
Pt-Scotch~\cite{DBLP:journals/pc/ChevalierP08}. Several recent studies
introduce light-weight algorithms for partitioning large-scale dynamic
graphs~\cite{DBLP:journals/pvldb/HuangA16, DBLP:journals/pvldb/XuCC14,
DBLP:conf/kdd/StantonK12, DBLP:conf/sigmod/MondalD12}. However, none of these
proposed methods leverage spatio-temporal characteristics of the data, which
are important for query processing in spatio-temporal graphs.

Previous bipartite graph partitioning algorithms~\cite{DBLP:conf/kdd/Dhillon01,
DBLP:conf/kdd/GaoLZCM05} focus on simultaneous clustering with spectral
co-clustering by computing eigenvalues and eigenvectors of Laplacian matrices.
However, spatio-temporal graphs often have billions of vertices and trillions
of edges, causing huge matrix storage and calculation overhead.

Time series databases (e.g., TimescaleDB~\cite{Timescale},
LittleTable~\cite{DBLP:conf/sigmod/RheaWWAS17}) enhance relational databases
for supporting time-series data by partitioning rows by timestamp. In
addition, every partition can be further sorted / partitioned by a specified
key. However, there is no efficient support for spatial range predicates,
which are important for spatio-temporal graphs.

Spatio-temporal graphs have been employed for video
processing~\cite{DBLP:conf/sigmod/LeeOH05}. A video consists of a series of
frames. Here, a graph vertex corresponds to a segmented region in a video
frame. A spatial edge connects two adjacent regions in a frame, while a
temporal edge links two corresponding regions in two consecutive frames. In
this paper, we define a spatio-temporal graph model based on the representative
applications. It is a bipartite graph, which is quite different from the model
in the video processing context.

In our previous work, we proposed LogKV~\cite{DBLP:conf/cidr/CaoCLWW13}, a
high-throughput, scalable, and reliable event log management system. The
ingestion algorithm of PAST is an extension of the ingestion algorithm of
LogKV. However, LogKV focuses mainly on temporal events. There is no support
for spatial range predicates. Therefore, LogKV cannot efficiently process the
spatio-temporal query workload considered in this paper.
Moreover, a preliminary four-page version of this paper overviews the
high-level ideas and shows preliminary experimental
results~\cite{PAST-icde19-short}.

\section{Problem Formulation}
\label{sec:def}

In this section, we present the formal definition of spatio-temporal
graphs and the query workload, then examine the design challenges.

\subsection{Spatio-temporal Graph}

Based on the representative applications in Section \cref{sec:intro},
we define spatio-temporal graphs as follows:

\newtheorem{mydef}{\bfseries Definition}

\vspace{-0.10in} \begin{mydef}[Spatio-temporal Graph] \label{def:sg} 
A spatio-tem- poral graph $G = (V_L, V_O, E)$.  $V_L$ is a finite set
of location vertices.   Every location vertex contains a location
property.  $V_O$ is a finite set of object vertices that represent
objects being tracked.  Every vertex in $V_L$ and $V_O$ is assigned a
globally unique vertex ID.  $E$ is a set of undirected edges.  Every
edge in $E$ connects an object vertex to a location vertex, and
contains a time property.  
\end{mydef}

\vspace{-0.10in} Examples of location vertices include locations that
customers visit in customer behavior tracking and mining, traffic
cameras in clone-plate car detection, and barcode scanners in shipment
tracking.  Examples of object vertices include people in customer
behavior tracking and mining, car plates in clone-plate car detection,
and packages in shipment tracking.  Every location vertex contains a
location property such that given two location vertices $u$ and $v$,
their distance $dist(u,v)$ is well defined.  For example, if the
application is concerned about geographic locations, then the location
property consists of the latitude and longitude of the location
vertex. An edge contains object ID, timestamp, location ID, and other
application-dependent properties.

In essence, a spatio-temporal graph as defined in Definition 1 is an
undirected bipartite graph.  We do not consider edges between object
vertices and edges between location vertices.  This abstraction
captures the key characteristics and the main challenges of the three
representative applications.

\subsection{Query Workload} 
\label{subsec:query}

We consider the following four types of queries based on the
representative applications:
\begin{list}{\labelitemi}{\setlength{\leftmargin}{5mm}\setlength{\itemindent}{0mm}\setlength{\topsep}{0.5mm}\setlength{\itemsep}{0mm}\setlength{\parsep}{0.5mm}}
\item \textbf{Q1: object trace}. Given an object and a time range,
find the list of (object, timestamp, location)'s that represent the
locations visited by the object during the time range.  For example,
Q1 can display the trace of a shipment package or the activities of a
customer in a specified period of time.
\item \textbf{Q2: trace similarity}. Given two objects and a time
range, compute the similarity of the two object traces during the time
range.  Consider edge ($o1$, $t1$, $l1$) in object $o1$'s trace and
edge ($o2$, $t2$, $l2$) in object $o2$'s trace.  The two edges are
considered similar if $|t2-t1| \leq TH_{time}$ and $dist(l2, l1) \leq
TH_{dist}$, where $TH_{time}$ and $TH_{dist}$ are predefined
thresholds on time and location distance, respectively.  The
similarity of the two traces is the count of similar edge pairs in the
two traces.
\item \textbf{Q3: similar object discovery}. Given an object $o$ and a
time range, list the objects that have similar traces compared to $o$.
Display the list in the descending order of trace similarity with $o$.
Q3 can be used to discover people with similar interests in the
customer behavior tracking and mining application.
\item \textbf{Q4: clone object detection}. Given a time range,
discover all the clone objects.  An object $o$ is a clone object if
there exists two incident edges ($o$, $t1$, $l1$) and ($o$, $t2$,
$l2$) such that the computed velocity is beyond a predefined
threshold: $\tfrac{dist(l2, l1)}{|t2-t1|} > TH_{velocity}$.  Q4
supports clone-plate car detection.  It can also be used to detect
duplicate credit cards by a credit card company in the customer
behavior tracking and mining application.
\end{list}

\subsection{Understanding the Challenges}\label{subsec:goals}

\Paragraph{Data Volume} The goal is to support $\sim$10 billion object
vertices, $\sim$10 million location vertices, and $\sim$100 trillion
edges.  Suppose the properties of a vertex require at most 100B.  Then
the object vertices and the location vertices require about 1TB and
1GB space, respectively.  An edge contains at least (object ID,
timestamp, location ID).  Suppose each field takes 8B.  Then an edge
takes at least 24B. 100 trillion edges require 2.4PB space.  The
3-replica redundancy policy requires a total 7.2PB storage space.
Suppose the disk capacity of a machine is about 10TB.  Therefore, 100
trillion edges require on the order of 1000 machines to store.

\Paragraph{Data Velocity} The goal is to support up to 1 trillion new
edges per day.  This means 1Tx24B = 24TB/day of new ingestion data.  A
day consists of 86,400 seconds.  Thus, this requires the design to
support 290MB/s ingestion throughput.

\Paragraph{Query Processing} We would like to minimize cross-machine
communication in query processing.  A random partition scheme would
work poorly for Q1--Q4 because a large amount of unrelated data need
to be visited.   Since the four queries process time ranges, object
IDs, and spatial locations, it is desirable to organize the data for
efficient accesses using these dimensions.

\section{PAST Overview}
\label{sec:sys}

We propose PAST, a framework for efficient \underline{PA}rtitioning and
query processing of \underline{S}patio-\underline{T}emporal graphs.
The system architecture of PAST is shown in Fig.\cref{fig:arch}.
PAST consists of multiple machines connected through the data center
network in a data center.  We assume that the latencies and bandwidths
of the data center network are much better than those in wide area
networks.
There is a coordinator machine and a large number of (e.g., 1000)
worker machines.  The coordinator keeps track of meta information of
the graph partitions and coordinates data ingestion.  The workers
store the graph data, handle incoming updates, and process queries.


\begin{figure}[t]
\vspace{-0.1in}
\centerline{\includegraphics[width=3.3in]{./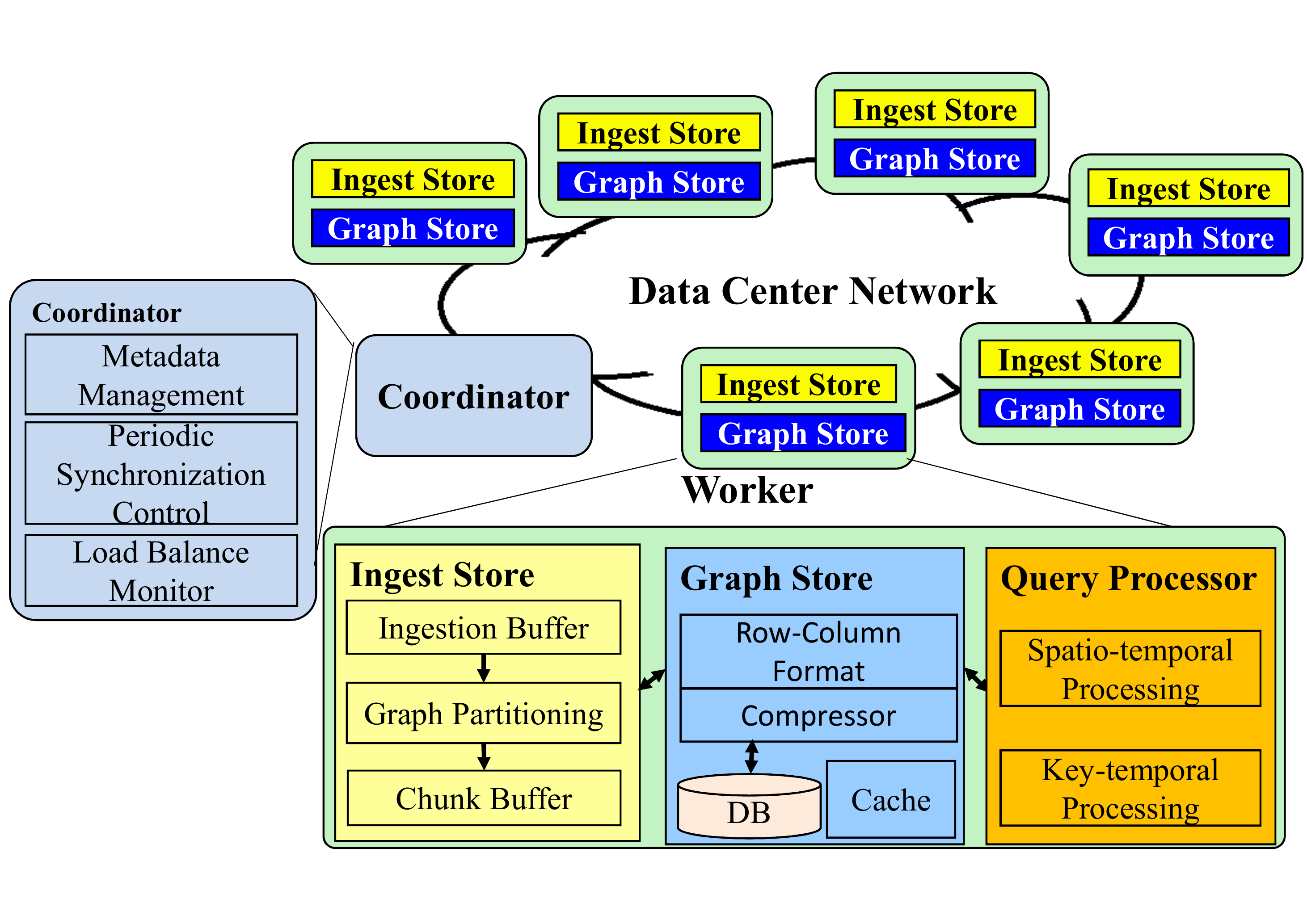}}
\vspace{-0.30in}
\caption{PAST system architecture.}
\label{fig:arch}
\vspace{-0.25in}
\end{figure}

\Paragraph{Graph Partitioning and Storage} The GraphStore component in
Fig.\cref{fig:arch} implements PAST's graph partitioning methods and
supports compressed storage of the main graph data.  The GraphStore
utilizes an underlying DB/storage system (e.g., the Cassandra
key-value store in our implementation).

We propose diversified partitioning for location vertices and
object vertices because they have drastically different
characteristics.  The number of location vertices is 1/1000 of that of
object vertices.  As the spatio-temporal graph is a bipartite graph,
the average degree of a location vertex is 1000 times of that of an
object vertex.  As a result, they have very different impact on the
communication patterns in query processing. 

Edges consume much more space than vertices, and are often the
performance critical factor in query processing.  All four queries
filter the edges with a given time range.  Therefore, GraphStore
should support time range filters efficiently.  Q1, Q2, and Q4 access
edges for a given object, two objects, and all objects, respectively.
Hence, it would be nice to organize edges according to object IDs.  On
the other hand, Q3 can be more efficiently computed if edges are
stored in spatio-temporal-aware orders so that GraphStore can filter
out a large number of edges that are not relevant to the trace of the
specified object.  However, these requirements seem contradicting.  
We solve this problem by taking advantage of the multiple replicas of
edges.  For the purpose of fault tolerance, PAST stores multiple
replicas for edges (e.g., 3 replicas).  Therefore, we propose a
spatio-temporal edge partition method and a key-temporal edge
partition method for different edge replicas.

\Paragraph{High-throughput Streaming Edge Ingestion} New edge updates
are streamed in rapidly.  As shown in Fig.\cref{fig:arch}, the
IngestStore component maintains a staging buffer for incoming new
edges.  All the worker machines handle incoming edges in rounds.  They
keep loosely synchronized clocks.  In every round, the IngestStores
collect the incoming edges in the current round.  At the same time,
IngestStores send edges collected in the previous round to their
destination GraphStores based on the partitions computed by PAST's
partition methods.  We design an efficient algorithm to perform the
data ingestion that avoids hot spots in the data shuffling.

\Paragraph{Query Processing and Optimization} Given PAST's partition
methods, we design a cost-based query optimizer to choose the best
partitions for an input query.  Our goal is to reduce cross-machine
communication and edge data access as much as possible.  The edge
partitions divide the spatio-temporal space and key-temporal space
into discretized blocks.  We perform block-level filtering to avoid
reading irrelevant edge data.  Then we also take advantage of triangle
inequalities for finer-grain filtering if geographic locations are
used in the application.

\newtheorem{mytheorem}{\bfseries Theorem}
\newtheorem{myproof}{\bfseries Proof}
\newtheorem{mylemma}{\bfseries Lemma}

\section{\hspace{-1mm}Graph Partitioning and Storage}
\label{sec:core}

In this section, we describe the partitioning and storage methods
for spatio-temporal graphs in PAST.

\subsection{Diversified Partitioning}
\label{subsec:partitioning}

There are two main approaches to graph partitioning in the literature:
vertex-based partitioning and edge-based partitioning. In vertex-based
partitioning~\cite{DBLP:journals/siamsc/KarypisK98}, a vertex is the
basic partitioning unit.  It assigns vertices along with their
incident edges to partitions in order to minimize cross-partition
edges.
However, for a high-degree vertex, which is common in real-world
graphs, there will be a large number of cross-partition edges no
matter which partition to assign it.  To address this problem,
edge-based partitioning~\cite{DBLP:conf/osdi/GonzalezLGBG12} assigns
edges to partitions.  If the incident edges of a (high-degree) vertex
are in $k>1$ partitions, then the scheme chooses one partition to
store the main copy of the vertex, and creates a ghost copy of the
vertex in every other partition, thereby reducing the number of
cross-partition edges to $k-1$ ghost-to-main virtual edges.

We propose diversified partitioning for spatio-temporal graphs.  Our
solution is inspired by edge-based partitioning.  It considers the
different properties of location vertices, object vertices, and edges,
and the characteristics of spatio-temporal graph queries.

\Paragraph{Location Vertex} The $\sim$10 million location vertices
require about 1GB space (cf. Section~\cref{subsec:goals}). A mid-range
server machine today is often equipped with 100GB--1TB of main memory.
Therefore, all the location vertices can easily fit into the main
memory of one machine.  On the other hand, location vertices are
frequently visited for obtaining the locations of specified location
IDs.  Therefore, PAST stores all location vertices in every worker
machine, and loads them into main memory at system initialization
time. In this way, location information can be efficiently accessed
locally without cross-machine communication.  PAST updates all the
worker machines when a location is updated.  The update cost is
insignificant because location vertices (e.g., shops, hotels, traffic
camera locations, shipping services) change very slowly, 

\Paragraph{Object Vertex} PAST performs hash-based partitioning for
object vertices.  The scheme is inspired by Redis~\cite{Redis}. Each
object vertex is assigned a unique 8-byte vertex ID. We first divide
the vertices into slots. There are $2^p$ slots. Given a vertex ID
$vid$, PAST computes $s=hash(vid)$, where $s \in[0,2^p)$ is a $p$-bit
slot ID. Then, we assign the slots to the worker nodes in a round
robin manner: $w = i \bmod{N}$, where $N$ is the number of worker
nodes. For fault tolerance purpose, we create 3 replicas by storing
the slot to worker $w$, $w+1$, and $w+2$. 

\Paragraph{Edge} An edge contains a triplet (object ID, timestamp,
location ID). Generally speaking, queries in spatio-temporal
applications often contain filtering predicates on object IDs, time
ranges, and locations. Specifically, Q1-Q4 all have time range
filters. Q1, Q2, and Q4 will benefit from a data layout where edges of
each object are stored together, while the amount of data accessed by
Q3 is reduced if spatio-temporal filtering is efficiently supported.
We take advantage of the multiple edge data replicas to design a
twofold partitioning strategy as described in Section~\cref{subsec:sp}
and~\cref{subsec:kp}.

\subsubsection{\hspace{-2mm}Skew-aware Spatio-temporal Edge Partitioning}
\label{subsec:sp}

Spatio-temporal edge partitioning first partitions edges according to
the spatial dimension so that given an edge $(o,t,l)$, all edges
associated with locations near $l$ are likely to be in the same
partition.  Then within a spatial partition, it constructs temporal
sub-partitions, each of which contains edges in a disjoint time range.
For the second part, it is straightforward to sort the edges by time
and obtain the temporal sub-partitions.  Therefore, we focus on the
spatial partitioning part of the design in the following.

\begin{algorithm}[t]
\caption{Unbounded weight-based spatial mapping.}
\label{algo:sp-unbounded}

   \SetKwProg{Fn}{Function}{}{}

   \Fn{UnboundedSpatioMapping(number of machines $M$,
         region list $\mathcal{R}$, location vertex list $V_L$)}{

        $\mathcal{R}_z$= Encode and sort regions in $\mathcal{R}$ by Z-codes;

        \lForEach{$r \in \mathcal{R}_z$}{Initialize weight $\mathcal{W}_r=0$}

        \lForEach{$l \in V_L$}{
            $r$ = $LocToRegion$($l$);
            $\mathcal{W}_{r}$++
        }

        $sum$=$0$; $rset$ = $\emptyset$;

        \ForEach{$r \in \mathcal{R}_z$}{

            $sum$+=$\mathcal{W}_{r}$;
            $rset$ = $rset \cup \{r\}$;

            \If{$sum >= |V_L|/M$}{
                Assign $rset$ to the next machine;

                $sum$=$0$; $rset$ = $\emptyset$;
            }
        }

        \lIf{$rset$ != $\emptyset$}{Assign $rset$ to the next machine}
    }

\end{algorithm}
\begin{algorithm}[t]

\caption{Bounded weight-based spatial mapping.}
\label{algo:sp-bounded}

   \SetKwProg{Fn}{Function}{}{}

   \Fn{BoundedSpatioMapping(
          number of machines $M$,
          region list $\mathcal{R}$,
          location vertex list $V_L$,
          unit width $b$
        )}{

        \lForEach{$r \in \mathcal{R}$}{Region weight 
$\mathcal{W}_r=0$}  \label{algo_line:init_weight_start}

        \lForEach{$l \in V_L$}{
            $r$ = $LocToRegion$($l$);
            $\mathcal{W}_{r}$++
        } \label{algo_line:init_weight_end}


	Suppose regions in $\mathcal{R}$ form a $h_x \times h_y$ 2D
matrix. Group every $b\times b$ sub-matrix into a unit. The
list of $\lceil \tfrac{h_x}{b} \rceil \times \lceil \tfrac{h_y}{b}
\rceil$ units is denoted $\mathcal{U}$;   \label{algo_line:region_units}

	\lForEach{$u \in \mathcal{U}$}{$\mathcal{UW}_u = \sum_{r \in u}
\mathcal{W}_r$}  \label{algo_line:unit_weights}

        $\mathcal{U}_{sorted}$ = sort $\mathcal{U}$ in descending
$\mathcal{UW}_u$ order;   \label{algo_line:sort_unit}

        \lForEach{$i$}{Machine weight $sum_i=0$}     \label{algo_line:assign_start}

	\ForEach{$u \in \mathcal{U}_{sorted}$}{

            $i_{min}$ = ${\operatorname{argmin}} (sum_i)$;

            Assign $u$ to machine $i_{min}$;
            
            $sum_{i_{min}}$ += $\mathcal{UW}_u$;   \label{algo_line:assign_end}
        }
    }

\end{algorithm}

We divide the universe of locations (e.g. the national map) into
spatial partitions.
We apply a grid to the universe.  Each grid cell is a partition (a.k.a region).
For simplicity, we consider square cells in this paper.  The cell width is denoted as $a$.
We choose $a$ such that (i) $a \gg D$, where $D=TH_{dist}$ is the distance threshold in Q2 and Q3, (ii) regions are small compared to the universe, and (iii) the number of regions are large compared to the number of worker machines.
Constraint (i) ensures that distance similarity computation can be handled inside individual partitions in most cases.  Constraint (ii) and (iii) make the partition size small enough so that multiple partitions can be assigned to a worker machine to achieve more balanced load.

Let partition $r$'s weight $\mathcal{W}_r$ be the number of locations in
region $r$.   The distribution of $\mathcal{W}_r$ can be very skewed.
For example, there are usually more locations in regions with higher
population.  More object vertices (e.g., people) may visit such
regions, leading to higher number of event edges.  Therefore, we
assume that $\mathcal{W}_r$ is proportional to the number of edges
incident to location vertices in region $r$.  

We would like to assign partitions to worker machines to achieve three goals:
(i) ensure that $Weight_i = \sum_{r \in Worker_i} \mathcal{W}_r$ of
worker machines are similar;
(ii) ensure that adjacent partitions are on the same machine with large probability
for reducing communication cost of evaluating spatial predicates; and
(iii) enable multiple workers to evaluate a query in parallel for better performance.  

In what follows, we propose an \emph{UnboundedSpatioMapping} algorithm
(Alg.\ref{algo:sp-unbounded}) that achieves the first two goals but
fails for the third goal. 
Then we design a \emph{BoundedSpatioMapping} algorithm (Alg.\ref{algo:sp-bounded}), 
and compute the algorithm parameter $b$ for achieving the third goal.

Alg.\ref{algo:sp-unbounded} lists the \emph{UnboundedSpatioMapping}
algorithm.  To achieve goal (ii), it employs the Z-order curve.  
It sorts the regions by their Z-codes (Line 2), and assigns regions in
contiguous Z-code intervals to machines (Line 5--10).  In this way,
neighboring regions are assigned to the same machine with large
probability.  
Theorem~\ref{thm:unbounded} shows that the algorithm achieves goal (i).

\begin{mytheorem}
\label{thm:unbounded}
Suppose that $\mathcal{W}_r\leq \epsilon_1 \tfrac{|V_L|}{M}$, where
$\epsilon_1 \ll 1$ and $\tfrac{|V_L|}{M}$ is the average load per
worker machine.  Given a target load skew $\epsilon_2$, then
$\tfrac{sum_i}{|V_L|/M}-1 < \epsilon_2$ (in
Alg.\ref{algo:sp-unbounded}).
\end{mytheorem}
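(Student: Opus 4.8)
The plan is to treat Alg.\ref{algo:sp-unbounded} as a greedy prefix-accumulation over the Z-sorted region list $\mathcal{R}_z$ and to bound the total weight deposited on any single machine. First I would fix notation: write $\bar{W} = |V_L|/M$ for the target load, observe that $\sum_{r \in \mathcal{R}_z} \mathcal{W}_r = |V_L|$ since every location vertex increments exactly one region weight, and interpret $sum_i$ as the value of the running accumulator $sum$ at the instant the $i$-th $rset$ is assigned, i.e.\ the $\mathcal{W}$-sum of the regions placed on machine $i$.

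The core of the argument is a \emph{one region before the threshold} estimate. Consider any machine $i$ assigned inside the main loop (Lines 8--10), let $r^\ast$ be the last region appended to its $rset$, and let $s^-$ be the value of $sum$ immediately before $r^\ast$ is added. The loop had not yet triggered an assignment when $r^\ast$ was about to be processed, so $s^- < \bar{W}$; otherwise the accumulator would already have been flushed and reset to $0$. Adding $r^\ast$ gives $sum_i = s^- + \mathcal{W}_{r^\ast} < \bar{W} + \mathcal{W}_{r^\ast}$, and invoking the hypothesis $\mathcal{W}_{r^\ast} \le \epsilon_1 \bar{W}$ yields $sum_i < (1+\epsilon_1)\bar{W}$, hence $sum_i/\bar{W} - 1 < \epsilon_1$. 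The assumption $\epsilon_1 \ll 1$ guarantees that a single region never crosses the threshold on its own, so every in-loop machine genuinely passes through a state with $s^- < \bar{W}$ and the estimate is never vacuous.

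Next I would dispose of the boundary cases. The final leftover assignment (Line 11) carries $sum_i < \bar{W}$, because it never triggered the loop condition, so its skew is negative and a fortiori below $\epsilon_1$; this machine is underloaded rather than overloaded. I would also note, for well-definedness, that each in-loop machine absorbs at least $\bar{W}$ while the in-loop weights sum to at most $|V_L| = M\bar{W}$, so there are at most $M$ in-loop machines (and if there are exactly $M$ they exhaust all the weight, leaving no leftover); thus the scheme uses at most $M$ machines in total. Combining the in-loop bound with the leftover case gives $sum_i < (1+\epsilon_1)\bar{W}$ for every machine $i$, i.e.\ the overload skew is at most $\epsilon_1$. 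The target skew $\epsilon_2$ then follows by choosing the grid fine enough that $\epsilon_1 \le \epsilon_2$ (smaller cells split heavy regions into lighter ones), after which $sum_i/\bar{W} - 1 < \epsilon_1 \le \epsilon_2$.

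I do not anticipate a serious obstacle: this is essentially the standard greedy \emph{next-fit} accumulation bound, and the whole argument hinges on the single observation that the accumulator is strictly below $\bar{W}$ right before the last region is appended. The only points requiring care are (a) pinning down precisely what $sum_i$ denotes, given that Alg.\ref{algo:sp-unbounded} reuses a scalar $sum$ rather than an indexed array, and (b) making explicit the link $\epsilon_1 \le \epsilon_2$ between the per-region weight hypothesis and the target-skew conclusion, which the statement leaves implicit.
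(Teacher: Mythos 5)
Your proof is correct and follows essentially the same route as the paper's: the key step in both is that Line 8 guarantees the accumulator is strictly below $|V_L|/M$ just before the final region of each $rset$ is appended, so $sum_i < (1+\epsilon_1)\tfrac{|V_L|}{M}$. You are somewhat more careful than the paper—you handle the Line 11 leftover case, verify that at most $M$ machines are used, and make explicit the link $\epsilon_1 \le \epsilon_2$ that the paper's conclusion $sum_i < (1+\epsilon_2)\tfrac{|V_L|}{M}$ silently assumes—but these are refinements of the same argument, not a different one.
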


\begin{myproof}
$\mathcal{W}_r\leq \epsilon_1 \tfrac{|V_L|}{M}$ means that the weight of an individual region is much smaller than the average machine weight.  This is guaranteed by constraints (ii) and (iii) for choosing $a$.
Line 8 in Alg.\ref{algo:sp-unbounded} ensures that the weight assigned to a worker 
machine $sum-W_r < \tfrac{|V_L|}{M}$.
Thus, we have $sum_i < (1+\epsilon_2)\tfrac{|V_L|}{M}$.
\end{myproof}

However, Alg.\ref{algo:sp-unbounded} fails to achieve goal (iii).  
Consider the trace of an object $o$.  $o$ may travel to a few cities. 
Those cities can be quite far away from each other, and visited
locations may lie in an area $A$ (e.g., California), which contains
a large number of partitions.
However, Alg.\ref{algo:sp-unbounded} may assign the entire $A$ to a single machine 
(e.g., our experiments use
10 worker machines).  As a result, the partitions do not allow
multiple workers to process the task in parallel.

Alg.\ref{algo:sp-bounded} solves the problem by assigning $b\times b$ adjacent spatial 
partitions to a worker at a time, where $b\geq 1$ is a control parameter.
The solution is based on the observation that 
the locations inside a city are often in spatial partitions that are close to each other.
With a larger $b$, it is more likely that locations in the same city of a trace are stored on the same machine, and therefore the amount of cross-machine communication may be reduced.
However, when $b$ is too large, the number of workers that can process a single query are roughly equal to the number of cities that $o$ visits, which significantly limits the parallelism of query processing.
Therefore, Alg.\ref{algo:sp-bounded} has to choose an appropriate $b$ to strike a
balance between communication cost and parallelism.

Alg.\ref{algo:sp-bounded} works as follows. It computes the weight $\mathcal{W}_r$ of every region $r$ (Line~\ref{algo_line:init_weight_start}--\ref{algo_line:init_weight_end}).
Then, it groups $b \times b$ adjacent regions into units (Line~\ref{algo_line:region_units}), 
and computes the weight $\mathcal{UW}_u$ of every unit, which is the sum of the weights of all the regions in this unit (Line~\ref{algo_line:unit_weights}).
The algorithm assigns units to worker machines and balances the loads.
To do this, it sorts the units in the descending order of unit weights (Line~\ref{algo_line:sort_unit}),
then employs a greedy algorithm to always assign the next heaviest unit to the machine with the smallest load (Line~\ref{algo_line:assign_start}--\ref{algo_line:assign_end}).

We derive constraints on parameter $b$ to satisfy the above goals. 
Theorem~\ref{thm:b1} shows the condition to balance the load among all worker machines, and
Theorem~\ref{thm:b2} shows the condition to balance communication  cost  and  parallelism.

Combining Theorem~\ref{thm:b1} and Theorem~\ref{thm:b2} and the conditions for $b$, we have 
$\tfrac{D}{2a(1-\sqrt{\alpha'})} \leq b \leq \sqrt{\epsilon_2 / \epsilon_1}$, where $a \gg D$, $D=TH_{dist}$, $\epsilon_1 \ll 1$ 
(e.g., $\epsilon$=0.001) and $\alpha'=max(\alpha,\tfrac{(2a-D)^2}{4a^2})$.

\begin{mytheorem}
\label{thm:b1}
Suppose that $\mathcal{W}_r\leq \epsilon_1 \tfrac{|V_L|}{M}$, where $\epsilon_1 \ll 1$ and $\tfrac{|V_L|}{M}$ is the average load per worker machine.  Given a target load skew $\epsilon_2$, if  $b \leq \sqrt{\epsilon_2 / \epsilon_1}$, then $\tfrac{sum_i}{|V_L|/M}-1 < \epsilon_2$.
\end{mytheorem}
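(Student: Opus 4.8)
The plan is to reduce the statement to the classical analysis of greedy longest-processing-time-first list scheduling on $M$ identical machines, where the ``jobs'' are the units $u \in \mathcal{U}$ and their ``processing times'' are the unit weights $\mathcal{UW}_u$. The first step I would carry out is to establish the key bound that every unit weight is small relative to the average machine load. Since each unit is a $b \times b$ sub-matrix of regions (Line~\ref{algo_line:region_units}), it contains at most $b^2$ regions, so $\mathcal{UW}_u = \sum_{r \in u} \mathcal{W}_r \leq b^2 \max_r \mathcal{W}_r \leq b^2 \epsilon_1 \tfrac{|V_L|}{M}$. Substituting the hypothesis $b \leq \sqrt{\epsilon_2/\epsilon_1}$, i.e. $b^2 \leq \epsilon_2/\epsilon_1$, yields $\mathcal{UW}_u \leq \epsilon_2 \tfrac{|V_L|}{M}$ for every unit. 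This is the single place where the constraint on $b$ is used, and it is the crux of the argument.

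Next I would invoke the standard greedy-scheduling bound. Observe first that the total weight of all units equals $|V_L|$, because each location contributes weight one to exactly one region (so $\sum_r \mathcal{W}_r = |V_L|$) and the units partition the regions; hence the average machine load is $\tfrac{|V_L|}{M}$. Now let machine $i$ attain the maximum final load $sum_i$, and let $u^*$ be the last unit the algorithm assigned to it. At the moment $u^*$ was processed, the greedy rule (Line~\ref{algo_line:assign_start}--\ref{algo_line:assign_end}) selected machine $i$ precisely because it had the smallest load, so its load just before receiving $u^*$ was at most the running average of the loads at that time, which is at most $\tfrac{|V_L|}{M}$ (indeed strictly less, since $u^*$ itself had not yet been counted). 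Because $u^*$ is the final unit assigned to machine $i$, its ultimate load equals this pre-load plus $\mathcal{UW}_{u^*}$.

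Combining the two estimates, I would conclude $sum_i < \tfrac{|V_L|}{M} + \mathcal{UW}_{u^*} \leq \tfrac{|V_L|}{M} + \epsilon_2 \tfrac{|V_L|}{M} = (1+\epsilon_2)\tfrac{|V_L|}{M}$, and dividing through by $\tfrac{|V_L|}{M}$ gives $\tfrac{sum_i}{|V_L|/M} - 1 < \epsilon_2$, as required. The main obstacle I anticipate is not any heavy computation but making the greedy-scheduling step rigorous: I must argue carefully that the maximum-load machine received its last unit while being (weakly) least loaded, and that the running average of the loads never exceeds the final average $\tfrac{|V_L|}{M}$. A minor point is the strictness of the inequality, which I would secure by noting that the weight already assigned at the moment $u^*$ is processed is strictly less than the total $|V_L|$.
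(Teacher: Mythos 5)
Your proof is correct and follows essentially the same route as the paper: bound each unit weight by $b^2\epsilon_1\tfrac{|V_L|}{M}\leq\epsilon_2\tfrac{|V_L|}{M}$ using the constraint on $b$, and combine it with the greedy bound $sum_i < \tfrac{|V_L|}{M}+\max_u(\mathcal{UW}_u)$. The only difference is that you actually justify the greedy-scheduling step (the Graham-style argument about the last unit assigned to the most-loaded machine), which the paper simply asserts.
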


\begin{myproof}
$\mathcal{W}_r\leq \epsilon_1 \tfrac{|V_L|}{M}$ means that the weight of an individual region is much smaller than the average machine weight.  This is guaranteed by constraints (ii) and (iii) for choosing $a$.
The greedy algorithm ensures that the weight assigned to a worker machine $sum_i < \tfrac{|V_L|}{M} + max(\mathcal{UW}_u)$. 
As each unit consists of $b^2$ regions, $max(\mathcal{UW}_u) \leq b^2 max(\mathcal{W}_r) \leq b^2 \epsilon_1 \tfrac{|V_L|}{M}$.
Combining the two inequalities and the condition for $b$, we have $sum_i < (1+\epsilon_2)\tfrac{|V_L|}{M}$.
\end{myproof}

\begin{mytheorem}
\label{thm:b2}
Given a target probability $\alpha$, 
if $b \geq \tfrac{D}{2a(1-\sqrt{\alpha'})}$, where $\alpha'=max(\alpha,\tfrac{(2a-D)^2}{4a^2})$, 
then $\forall$ unit $\mathcal{U}$, location $l$ and $l_{nr}$: 
$ Prob[l_{nr}\in \mathcal{U} |  l  \in \mathcal{U} \wedge dist(l_{nr}, l)\leq D] \geq \alpha$.
\end{mytheorem}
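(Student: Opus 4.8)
The plan is to reduce the claim to a two-dimensional geometric overlap computation. A unit $\mathcal{U}$ is an axis-aligned square whose side length is $L=ab$, since it consists of a $b\times b$ block of cells each of width $a$. I would model the location $l$ as uniformly distributed inside $\mathcal{U}$ (equivalently, a uniformly random grid offset), and write the near location as $l_{nr}=l+\vec v$ with displacement $\vec v=(v_x,v_y)$. The conditioning $dist(l_{nr},l)\le D$ forces $|v_x|\le D$ and $|v_y|\le D$; because $a\gg D$ we have $D\le a\le L$, so no truncation occurs in the formulas below. The key observation is that, for a fixed $\vec v$, the event $\{l\in\mathcal{U}\wedge l_{nr}\in\mathcal{U}\}$ is exactly the event that $l$ lies in $\mathcal{U}\cap(\mathcal{U}-\vec v)$, a rectangle of dimensions $(L-|v_x|)\times(L-|v_y|)$.

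Given this, the first step computes the conditional probability for a fixed displacement as $\frac{(L-|v_x|)(L-|v_y|)}{L^2}$. The second step averages this over $\vec v$. I would use the conservative model in which $v_x,v_y$ are independent and uniform on $[-D,D]$; this over-approximates the true disk constraint $v_x^2+v_y^2\le D^2$, and since the overlap fraction is decreasing in $|v_x|$ and $|v_y|$, the resulting value is a valid lower bound on the true co-location probability. Using $E|v_x|=E|v_y|=D/2$ together with independence then yields the clean lower bound $\left(1-\frac{D}{2L}\right)^2=\left(1-\frac{D}{2ab}\right)^2$.

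The third step is to impose $\left(1-\frac{D}{2ab}\right)^2\ge\alpha$ and solve for $b$, giving $b\ge\frac{D}{2a(1-\sqrt\alpha)}$. Finally I would account for the $\alpha'$ in the statement: evaluating the bound at the smallest admissible value $b=1$ already guarantees probability at least $\left(1-\frac{D}{2a}\right)^2=\frac{(2a-D)^2}{4a^2}$. Hence when the target $\alpha$ is below this single-cell value the constraint is automatically met, and replacing $\alpha$ by $\alpha'=\max\!\big(\alpha,\frac{(2a-D)^2}{4a^2}\big)$ packages both regimes into one threshold that is always at least $1$ (a necessary sanity check, since $b$ counts cells) and keeps $1-\sqrt{\alpha'}>0$ so the bound is well defined.

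I expect the main obstacle to be the probabilistic modeling rather than the algebra. The overlap-area computation is routine, but making the statement rigorous requires pinning down the distribution of $l$ (uniform within the unit) and justifying that treating the coordinate displacements as independent and uniform on $[-D,D]$ genuinely under-estimates the probability for the actual disk of radius $D$ — this rests on the monotonicity of the rectangular overlap together with the containment of the disk in the square $[-D,D]^2$, and can alternatively be sidestepped by adopting the independent uniform-$[-D,D]$ displacement as the modeling assumption outright. A secondary subtlety is boundary handling: units at the edge of the universe can only increase co-location, so a generic interior unit is the worst case, and one must also confirm the $b\ge 1$ integrality is respected through the $\alpha'$ construction.
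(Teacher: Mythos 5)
Your proposal is correct and lands on exactly the paper's key inequality, $\left(1-\tfrac{D}{2ab}\right)^2\geq\alpha$, but it gets there by a genuinely different decomposition. The paper partitions the $ba\times ba$ unit into nine regions according to where $l$ falls (an inner square $S_1$ with escape probability $0$, four edge strips $S_2$, four corner squares $S_3$), replaces the Euclidean disk by the Chebyshev box via $dist\geq dist_c$, and integrates the ``overhang fraction'' of the $2D\times 2D$ box region by region to obtain $Prob[l_{nr}\notin\mathcal{U}\mid\cdot]\leq\tfrac{4baD-D^2}{4b^2a^2}$, whose complement is $\left(1-\tfrac{D}{2ab}\right)^2$. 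You instead fix the displacement $\vec v$, observe that co-location is the event $l\in\mathcal{U}\cap(\mathcal{U}-\vec v)$ with area $(L-|v_x|)(L-|v_y|)$, and average over $\vec v$ uniform on $[-D,D]^2$, using independence to get $(L-D/2)^2/L^2$ in one line. This is a Fubini swap of the same double integral under the same probability model (uniform $l$ in the unit, displacement relaxed from the disk to the bounding box), so the two arguments are equivalent in substance; yours is shorter and makes the structure of the bound transparent, while the paper's case analysis makes explicit which parts of the unit contribute to escapes (useful intuition, but with more integral bookkeeping and a figure). Your justification that the box model under-estimates the disk model (monotonicity of the overlap plus containment) is the same implicit step the paper takes with the Chebyshev-distance remark, and your treatment of $\alpha'$ as packaging the regime where $b=1$ already suffices matches the paper's closing observation that the solution $b\geq\tfrac{D}{2a(1-\sqrt{\alpha})}$ is valid only for $\alpha\in[\tfrac{(2a-D)^2}{4a^2},1)$.
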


\begin{proof}
We divide a $ba \times ba$ unit (each unit is $b \times b$ adjacent
spatial partitions, each partition is a cell with width $a$) into 9 parts, 
as illustrated in Fig.\ref{fig:unit_divide}a.  $S_1$ is an inner square.  $P[l_{nr}
\notin U | l \in S_1]=0$.  $S_2$ is a $(ba-2D) \times D$ rectangle.
There is a single unit adjacent to $S_2$.  We consider the four $S_2$
similarly.
$P[l \in \cup S_2] =\tfrac{4(ba-2D)D}{(ba)^{2}}$.  
$S_3$ is a $D \times D$ square.  There are three units (including the
one in the diagonal position) adjacent to $S_3$.  
We also consider the four $S_3$ similarly. 
$P[l \in \cup S_3] =\tfrac{4D^2}{(ba)^{2}}$.

\begin{figure}[h]
  \vspace{-0.15in}
  \centering
  \includegraphics[width=3.3in]{./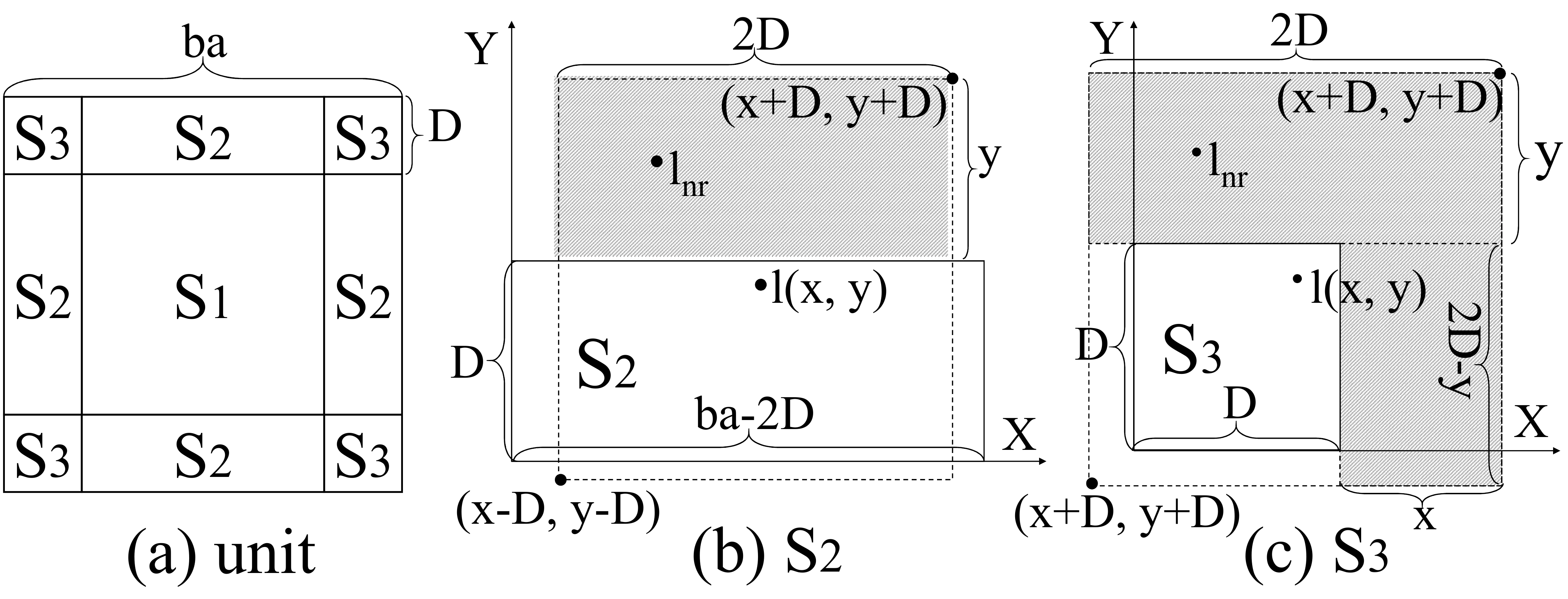}
  \vspace{-0.25in}
  \caption{Dividing a unit into nine parts.}
  \label{fig:unit_divide}
\vspace{-0.15in}
\end{figure}

We derive bounds on the probability by using Chebyshev distance
$dist_c(l_1,l_2)=max(|x_1-x_2|, |y_1-y_2|)$.  Note that Euclidean
distance $dist(l_1,l_2)\geq dist_c(l_1,l_2)$.  Given a location in
$S_2$ ($S_3$), Fig.\ref{fig:unit_divide}b (Fig.\ref{fig:unit_divide}c)
illustrates the shadow area outside of the unit where its nearby
location may reside in.  We have

\vspace{0.05in}
$\begin{array}{l}

 \hspace{-0.2in}

 Prob[l_{nr}\notin \mathcal{U} | l \in \mathcal{U} \wedge dist(l_{nr}, l)\leq D] 

 \vspace{0.05in}\\\hspace{-0.2in}

 = Prob[l_{nr} \notin \mathcal{U} | l \in S_2]Prob[l \in \cup S_2] 

 \vspace{0.05in}\\\hspace{-0.2in}

   + Prob[l_{nr} \notin \mathcal{U} | l \in S_3]Prob[l \in \cup S_3] \\

 \vspace{-0.05in}\\\hspace{-0.2in}

 \leq

\frac{\iint_{S_{2}}\frac{2Dy}{4D^{2}}dxdy}{(ba-2D)D}\frac{4(ba-2D)D}{(ba)^{2}}
+
\frac{\iint_{S_{3}}\frac{2Dx+2Dy-xy}{4D^{2}}dxdy}{D^{2}}\frac{4D^{2}}{(ba)^{2}}  

 \vspace{0.08in}\\\hspace{-0.2in}

 = \frac{4baD-D^{2}}{4b^{2}a^{2}}

\vspace{0.05in}

\end{array}$

$\begin{array}{l}

 \hspace{-0.2in}

 Prob[l_{nr}\in \mathcal{U} | l \in \mathcal{U} \wedge dist(l_{nr}, l)\leq D] 

 \vspace{0.05in}\\\hspace{-0.2in}

 = 1 -  Prob[l_{nr}\notin \mathcal{U} | l \in \mathcal{U} \wedge dist(l_{nr}, l)\leq D]

 \vspace{-0.05in}\\\hspace{-0.2in}

 = 1- \frac{4baD-D^{2}}{4b^{2}a^{2}}

\vspace{0.05in}

\end{array}$

We see that if
$1-\tfrac{4baD-D^{2}}{4b^{2}a^{2}} \ge \alpha$, then the target is achieved.

The valid solution to this inequality is $b\geq
\frac{D}{2a(1-\sqrt{\alpha})}$, if $\alpha \in [\frac{(2a-D)^2}{4a^2}, 1)$. 
Thus, we have 
if $b \geq \tfrac{D}{2a(1-\sqrt{\alpha'})}$, where $\alpha'=max(\alpha,\tfrac{(2a-D)^2}{4a^2})$, 
then $\forall$ unit $\mathcal{U}$, location $l$ and $l_{nr}$: 
$ Prob[l_{nr}\in \mathcal{U} |  l  \in \mathcal{U} \wedge dist(l_{nr}, l)\leq D] \geq \alpha$.

\end{proof}


\subsubsection{Key-temporal Edge Partitioning}
\label{subsec:kp}

Key-temporal edge partitioning first constructs key partitions to
store edges of objects in the same slot.  It follows the partitioning
method of object vertices to obtain the key partitions. That is, given
an edge (object ID, timestamp, location ID), it uses object ID to
compute the slot ID $s$ and then maps $s$ to worker $w$.  Then within
a key partition, it constructs temporal sub-partitions.  Similar to
spatio-temporal partitioning, it divides edges into disjoint time
ranges, each of which constitutes a sub-partition.

\subsection{Compressed Columnar Edge Store}
\label{subsec:store}

In this section, we focus on the storage of edges in a worker machine
node.  GraphStore exploits existing storage / DB systems (e.g.,
Cassandra in our implementation) as the underlying DB to store edge
data.

GraphStore organizes edge data into (row key, edge data) pairs, where
the row key uniquely identifies a spatio-temporal / key-temporal
sub-partition and the edge data contain compressed list of (object ID,
timestamp, location ID, other edge properties) in the sub-partition.
The row key is a concatenation of the following fields:
\begin{list}{\labelitemi}{\setlength{\leftmargin}{5mm}\setlength{\itemindent}{0mm}\setlength{\topsep}{0.5mm}\setlength{\itemsep}{0mm}\setlength{\parsep}{0.5mm}}
\item \emph{node id:} uniquely identifies the worker node\footnote{
\small Our implementation modifies the partitioning function of
Cassandra to identify the node id field in row keys for computing
Cassandra partitions.};

\item \emph{partitioning method:} `A' for spatio-temporal partitioning
and `B' for key-temporal partitioning;

\item \emph{partition-specific id:} region ID in spatio-temporal
partitioning or slot ID in key-temporal partitioning;

\item \emph{time range:} TimeRange = $\lfloor \mbox{timestamp/TRU} \rfloor$, where TRU (Time
Range Unit) is a configuration parameter used to discretize time.

\end{list}

\noindent For the edge data, we employ columnar layout for the
attributes then compress the columns.  Columnar layout is attractive
because (i) it has good compression ratio and (ii) a query needs to
uncompress and access only the relevant columns.  For example, we put
the object IDs of all edges in the sub-partition in an array, then
compress the array.  Similarly, we obtain the compressed
representations of timestamps, location IDs, and other edge properties
if exist.  We measure the compression ratios and efficiency of several
well-known compression algorithms, and choose LZ4 and Snappy in PAST
because of their good performance~\cite{Compressor}.

To store the columns in Cassandra, we have two implementations: (i)
\emph{C}: we create multiple tables (or column families) in Cassandra,
and store the compressed column of each edge attribute in a separate
table; and (ii) \emph{R}: we concatenates all the compressed columns
into a single binary value, and store the value in a single Cassandra
table.  \emph{R} essentially implements the PAX
layout~\cite{DBLP:conf/vldb/AilamakiDHS01}.  We evaluate the two
implementations experimentally.

\subsection{Fault Tolerance}
\label{subsec:fault-tolerance}

PAST maintains edge replicas using different partitioning methods.
However, this might cause two replicas of the same edge to reside in
the same physical machine.

In our design, we check if such situation occurs and store the edge to
another machine to ensure replicas are on different
machines.  Suppose an edge is assigned to worker $w_s$ in
spatio-temporal partitioning, and to $w_k$ in key-temporal
partitioning. The problem occurs if $w_s$=$w_k$. If the condition
is true, then PAST will store a copy of the edge to worker $w_s+1$.

We compute the extra space required.  Suppose that there are $M$
worker nodes. Edges are evenly distributed across the workers. Then
the probability that an edge is assigned to a worker is
$\tfrac{1}{M}$. The probability that $w_s$=$w_k$
is $M\cdot\tfrac{1}{M}\cdot\tfrac{1}{M} = \tfrac{1}{M}$. Therefore,
the additional storage incurred is $\tfrac{1}{M}$ of the total edge
data size. When $M$ is large (e.g., 100--1000), the extra space
required is negligible.

\section{\hspace{-3mm}High-Throughput Edge Ingestion}
\label{sec:stream}

The data velocity challenge requires PAST to support up to 1 trillion
new edges per day.  We break down this goal into three sub-goals: (i)
store up to 1 trillion new edges per day; (ii) achieve the proposed
partitioning and storage strategy for the new edges as described in
Section~\cref{sec:core}; and (iii) balance the ingestion workload
across the worker machines and avoid hot spots as much as possible.
While directly reflecting the desired ingestion throughput, sub-goal
(i) is not sufficient by itself.  The other two sub-goals are
important because sub-goal (ii) supports sustained ingestion
performance and enables query processing on new edges, and sub-goal
(iii) improves the scalability of the system.

We propose a high-throughput edge ingestion algorithm as described in
Alg.\cref{algo:random_shuffle}, which extends our previous work on
event log processing~\cite{DBLP:conf/cidr/CaoCLWW13} to support
diversified partitioning for spatio-temporal graphs.
The high-level picture of the algorithm is as follows.  IngestStore on
each machine buffers the incoming new edges before shuffling them to
GraphStore to implement the desired partitioning strategy.
IngestStore employs double buffering.  It buffers $m$-TRU worth of
data while shuffling the previous $m$-TRU worth of data to GraphStore
on the destination workers at the same time. $m$ is a parameter to
tolerate instant ingestion bursts at individual IngestStores and avoid
hot spots in shuffling.

Alg.\cref{algo:random_shuffle} consists of three functions.  The first
function, \emph{IngestStore\_AppendNewEdge}, is invoked by IngestStore
upon receiving a new edge.  It appends the new edge to the end of
\emph{inbuf} (Line~\cref{algo:append-2-inbuf}).  

Every $m$-TRU time, the coordinator initiates a new round of shuffling
operation by broadcasting a \emph{NextRound} message with a
$t_{start}$ parameter to all workers.  Then IngestStore at each worker
invokes the second function, \emph{IngestStore\_ComputePartition}, to
compute the partitions for edges in $[t_{start},
t_{start}+m\cdot\mbox{TRU})$ and copy the edges to \emph{outbuf}.
Note that the coordinator can initiate the round at time
$t_{start}+m\cdot\mbox{TRU}+t_{delay}$ in order to tolerate
communication delays from event sources for up to $t_{delay}$ time.
The function computes the spatio-temporal partition (Line 7--9) and
the key-temporal partition (Line 10--12) for an edge.  Then it checks
if the destination workers of the spatio-temporal and the key-temporal
partitions collide.  In such cases, it copies the edge to the
\emph{outbuf} of the next worker as discussed in
Section~\cref{subsec:fault-tolerance} (Line 15--16).  In the end, the
function truncates the \emph{inbuf}.  After the invocation,
IngestStore replies a \emph{PartitionDone} message to the coordinator.
Note that the copy operation is very similar to the partitioning step
in the in-memory partitioned hash join algorithm.  When the number of
destination workers is large, there can be significant TLB and cache
misses.  We perform multi-pass copying in the spirit of the
radix-cluster algorithm~\cite{DBLP:conf/vldb/ManegoldBK00} for better
CPU cache performance.  

When all the partitions are computed at all workers, the coordinator
broadcasts a \emph{Shuffle} message to all workers.  Then GraphStore
at each worker invokes the third function, \emph{GraphStore\_Shuffle}.
It randomly permutes the worker list, then attempts to retrieve edges
from IngestStore at every worker $w$ in the list
$\mathcal{W}_{random}$ (Line 21--23).  If $w$ is busy serving another
GraphStore, then it puts $w$ into the busy list (Line 22).  After
processing $\mathcal{W}_{random}$, the function repeatedly processes
workers in the busy list until the busy list is empty (Line 24--29).
Upon receiving edge data from all workers, GraphStore sorts and
compresses the edges, then stores the compressed data as described in
Section~\cref{subsec:store} (Line 30).

We re-examine the three sub-goals.  It is clear that sub-goal (ii) is
achieved by \emph{Ingest\_StoreComputePartition} and
\emph{GraphStore\_Shuffle}.  For sub-goal (iii), the double buffering
mechanism and the random permutation are designed to reduce hot spots
as much as possible.  For sub-goal (i), we measure the ingestion
throughput experimentally in Section~\cref{sec:evaluate}.

\begin{algorithm}[t]
  \caption{High-throughput edge ingestion algorithm.}
  \label{algo:random_shuffle}

  \SetKwProg{Fn}{Function}{}{}

  \Fn{IngestStore\_AppendNewEdge(objId,time,locId)}{
      Append edge (objId,time,locId) to inbuf;   \label{algo:append-2-inbuf}
  }

  \Fn{IngestStore\_ComputePartition($t_{start}$, $m$, 
     Worker list $\mathcal{W}=\{w_0,\cdots,w_{n-1}\}$)}{

     \ForEach{(objId,time,locId) $\in$ inbuf \textbf{and} \\ 
         \hspace{3em} time $\in [t_{start}, t_{start}+m\cdot\mbox{TRU})$}{

         tr= time / TRU;

         \vspace{1mm}
         \hspace{5mm} /* Compute spatio-temporal partition */

         $w_s$= $ComputeWorkerforLocation$(locId);

         Append (objId,time,locId) to outbuf$_s$[$w_s$][tr];

         \vspace{1mm}
         \hspace{5mm} /* Compute key-temporal partition */

         $w_k$= $ComputerWorkerforObj$(objId);

         Append (objId,time,locId) to outbuf$_k$[$w_k$][tr];

         \vspace{1mm}
         \hspace{5mm} /* check if need extra replica for fault tolerance */

         \If{$w_s$ == $w_k$}{ 

             $w_{next}$ = $GetNextWorker$($\mathcal{W}$, $w_s$);

             Append (objId,time,locId) to outbuf$_f$[$w_{next}$][tr];
         }
     }

     Truncate inbuf to free the space of the processed edges;
  }

  \Fn{GraphStore\_Shuffle($\mathcal{W}=\{w_0,\cdots,w_{n-1}\}$)}{
     $me$ = My worker Id;

     $\mathcal{W}_{random}$ = Randomly permute the worker list $\mathcal{W}$;

     \ForEach{$w \in \mathcal{W}_{random}$}{
         \lIf{Worker $w$ is busy}{
             Put $w$ into $busylist$}
         \lElse{
	     Retrieve outbuf$_s$[$me$][..], outbuf$_k$[$me$][..], and
             outbuf$_f$[$me$][..] from worker $w$;
             }
     }

     \Repeat{ $busylist$ == $\emptyset$ }{
         \ForEach{$w \in busylist$}{
             \If{Worker $w$ is not busy}{
	         Retrieve outbuf$_s$[$me$][..], outbuf$_k$[$me$][..], and
                 outbuf$_f$[$me$][..] from worker $w$;

                 Remove $w$ from $busylist$;
             }
         }
     }

     Compress and store received edges to the underlying DB;
  }

\end{algorithm}

\section{Spatio-temporal Graph Query\\Processing and Optimization}
\label{sec:optimize} 

\begin{table*}[t]

\setlength{\tabcolsep}{4pt}

\begin{minipage}[t]{0.55\textwidth}
\caption{Terms used in the cost model.}
\vspace{-0.258in}
\begin{center}
\renewcommand{\arraystretch}{1.128}
\small
\begin{tabular}{|l|l|l|l|}
\hline
$S_e  $   & edge size                          & $[t_s,t_e]$& time range of a query\\\hline
$R_e  $   & \#edges ingested per second        & $N_q  $   & \#regions or \#slots accessed by query \\\hline
$N    $   & \#regions or \#slots in total      & $P_q  $   & $P_q=N_q/N$ \\\hline
$c_r  $   & cost for reading from disk         & $f    $   & network communication factor     \\\hline
$c_a  $   & cost for invoking backend          & $p    $   & degree of parallelism \\ \hline
$\beta$   & penalty factor $\beta \in$ (0,1)   & $S_T  $   & edge data size in a TRU: $S_eR_e\cdot$ TRU \\ \hline
$N_t  $   & \multicolumn{3}{l|}{\#TRU covered by a query's time range:
$\lceil{\tfrac{t_e}{TRU}}\rceil - \lceil{\tfrac{t_s}{TRU}}\rceil + 1$}
\\
\hline

\end{tabular} 
\label{tab:def-params}
\end{center}
\vspace{-0.25in}
\end{minipage}
\begin{minipage}[t]{0.45\textwidth}
\caption{Cost computation for Q1--Q4.}
\label{tab:cost-replica}
\vspace{-0.25in}
\small
\begin{center}
\begin{tabular}{|c|c|c|}
\hline
          &  spatio-temporal (ST) & key-temporal (KT)  \\
\hline
Q1        &  $\tfrac{c_rS_TN_t + 1048576c_a}{10\beta}$    
          &  $\tfrac{c_rS_TN_t + 16384c_a}{16384\beta}$  \\
\hline
Q2        &  $\tfrac{c_rS_TN_t + 1048576c_a}{10\beta}$    
          &  $\tfrac{c_rS_TN_t + 16384c_a}{16384\beta}$  \\
\hline
Q3        &  $\tfrac{(c_r + f_3)S_TN_t + 1048576c_a}{10\beta}$    
          &  $\tfrac{(c_r + f_3)S_TN_t + 16384c_a}{10\beta}$  \\
\cline{2-3}
KT+ST
          & \multicolumn{2}{c|}{
             $\tfrac{c_rS_TN_t + 16384c_a}{16384\beta} +
              \tfrac{(c_r+f_{sk3})S_TN_tx}{10485760\beta} + \tfrac{xc_a}{10\beta}$
            }\\
\hline
Q4        &  $\tfrac{(c_r + f_4)S_TN_t + 1048576c_a}{10\beta}$    
          &  $\tfrac{c_rS_TN_t + 16384c_a}{10\beta}$  \\
\hline
\end{tabular} 
\end{center}
\vspace{-0.25in}
\end{minipage}

\end{table*}

Spatio-temporal query processing in PAST has two distinctive features
compared to query processing in existing systems.
First, there are two edge partitions: spatio-temporal and
key-temporal.  We study the cost models for choosing edge partitions
in Section~\cref{subsec:cost-plan}.
Second, there are predicates on edge similarity, which requires
similarity joins on the spatial and/or temporal dimensions.  We
optimize the evaluation of edge similarity in
Section~\cref{subsec:edge-similarity}.

\subsection{Cost-based Partition Selection} 
\label{subsec:cost-plan}

Query processing can exploit the two types of edge partitions to skip
accessing a large amount of unrelated data and to reduce cross-machine
communication overhead.   In this subsection, we derive a cost model
for the queries.  

Table~\cref{tab:def-params} lists terms used in the model. Among the
terms, $S_e$, $R_e$, $S_T$, $c_r$, and $c_e$ are constants.  $N$ is
the total number of regions (slots) in spatio-temporal (key-temporal)
partitioning.  For example, there are 16384 slots and 1048576 regions
in our experiments.  The time range parameters $t_s$, $t_e$, and $N_t$
are determined by a given query.  $N_q$, $P_q$, $f$, and $p$ are
dependent on both the given query and the chosen edge partition type.

The data size accessed by a query can be calculated as follows:
\vspace{-0.06in}
\begin{equation*}
	data\_size = S_T N_t P_q
\vspace{-0.06in}
\end{equation*}

\noindent
Then, we formulate the total cost for evaluating the query: 
\vspace{-0.06in}
\begin{equation*}
\begin{aligned}
    Cost &= \tfrac{1}{\beta p}(Cost_{read} + Cost_{invoke} + Cost_{shuffle}) \\
         &= \tfrac{1}{\beta p}[(c_r+f)S_TN_tP_q + c_aN_q]
\end{aligned}
\vspace{-0.06in}
\end{equation*}

\noindent
where $\beta$ is a penalty factor. The actual degree
of parallelism is $\beta p$.

$Cost_{read}$ is the cost of reading $data\_size$ amount of data from
disk: $Cost_{read} = c_r data\_size$.  $Cost_{invoke}$ is the cost of
invoking the underlying DB.  PAST retrieves all $N_t$ sub-partitions
of every region / slot with a single invocation.  Therefore, the
number of invocations is $N_q$, and $Cost_{invoke} = c_a N_q$.
$Cost_{shuffle}$ is the cost of communicating a fraction of the
retrieved data across machines.  $Cost_{shuffle} = f\cdot data\_size$.
Here, parameter $f$ depends on the query evaluation strategy and the
network bandwidth.

Given the cost model, we compute the cost of processing Q1--Q4 using
different partitions.  Table~\cref{tab:cost-replica} summarizes the
computed costs.  We set $N$ to be 1048576 and 16384 for
spatio-temporal and key-temporal partitions, respectively.  The degree
of parallelism ($p$), the network communication factor ($f$) and
region/slot proportion ($P_q$) differ as different partitions are
selected for query execution.  There are 10 workers in our
experiments.  Therefore, $p \leq 10$.

\Paragraph{Q1: object trace} 
(i) \emph{spatio-temporal:} Since the given object $o$ may visit any
spatial locations during the time range, every worker reads its own
spatio-temporal partition to look for edges that contain object $o$.
Every region needs to be examined.  No edge data shuffle is necessary.
So $p=10, f=0, N_q=1048576, P_q=1$.

\noindent(ii) \emph{key-temporal:} Only the worker that contains the
key-temporal partition of $o$ reads the $o$'s slot.
There is no data shuffling during computation. 
So $p=1, f=0, N_q=1, P_q=\tfrac{1}{16384}$.

\Paragraph{Q2: trace similarity} 
(i) \emph{spatio-temporal:} Compared with Q1, the query processing
strategy is unchanged.  Now every worker looks for edges that contain
\emph{either given objects}.  So the parameters are also unchanged.
$p=10, f=0, N_q=1048576, P_q=1$.

\noindent (ii) \emph{key-temporal:} We perform Q1 then computed
similarity between the obtained traces. In the worst case, the
key-temporal partitions of the two given objects reside in two
machines.  So $p=2, f=0, N_q=2, P_q=\tfrac{1}{8192}$.

\Paragraph{Q3: similar object discovery} 
Q3 is a heavy-weight query. Both the spatio-temporal and the
key-temporal solutions read all data in the time range.  Then they
perform a join and a groupby operation by shuffling all the retrieved
data among workers.  Therefore, the network communication factor
($f_3$) is very large.  (i)  \emph{spatio-temporal:} $p=10,
N_q=1048576, P_q=1$.  (ii) \emph{key-temporal:} $p=10, N_q=16384,
P_q=1$.

We design an optimized execution strategy for Q3 that combines the two
partitions.  (iii) \emph{key-temporal+spatio-temporal:} It first
obtains the trace of the given object $o$ by Q1 accessing the
key-temporal partitions.  Then it finds all locations visited by $o$,
computes the regions from the locations, and sends the locations to
workers that store the relevant regions.  After that, each worker
reads the relevant regions from its spatial-temporal partition and
looks for edges that are similar to $o$'s trace.  Finally, similar
edges are grouped by objects to compute the aggregate similarities,
which are then sorted to obtain the query result.  While the final
step performs data shuffling, the amount of data shuffled is
reasonably small compared to that in (i) and (ii).  Therefore, the
network communication factor $f_{sk3} \ll f_3$.  Suppose the number of
regions to access is $x$ ($x \ll 1048576$), and all workers
participate in the computation.  So $p=10, N_q=x,
P_q=\tfrac{x}{1048576}$.

\Paragraph{Q4: clone object detection}
(i)  \emph{spatio-temporal:} All machines participate in computation.
They have to shuffle all the data in the time range.  Suppose the
network communication factor is $f_4$. Then $f_4 \approx f_3$ for the
same time range.  Other parameters are $p=10, N_q=1048576, P_q=1$.

\noindent(ii) \emph{key-temporal:} Each machine reads every slot in
the time range.  Since all edges of an object are in the same slot,
velocity computation can be performed locally without data shuffling.
Thus, $f=0, p=10, N_q=16384, P_q=1$.

\Paragraph{Comparison} From Table~\cref{tab:cost-replica}, we see that
for Q1, Q2, and Q4, KT's cost is lower than ST's cost: $Cost_{KT} <
Cost_{ST}$.  Therefore, KT is the better partition to use.  For Q3,
$Cost_{ST+KT} < Cost_{KT} < Cost_{ST}$.  Therefore,  ST+KT should be
selected.

In general, for a given query, we can apply the above analysis to
compute the cost for different partition types, and choose the best
partition or partition combination based on the computed costs.


\subsection{Edge Similarity Computation}
\label{subsec:edge-similarity}

\Paragraph{Optimizing Location Computation} We would like to improve
the efficiency of computing $dist(l_1, l_2) \leq TH_{dist}$.  The
basic idea is to filter out far-away locations that cannot satisfy the
inequality without computing the distance.

Given a set of locations in an area (e.g., a region or a unit). We
apply a grid to the area, where a grid cell is a $\eta \times \eta$
square and $\eta \ll $ region width $a$.  It is easy to convert the
coordinates $(x,y)$ of a location $l$ into grid coordinates $(x_{g},
y_{g})$: $x_{g}=\lfloor x/\eta \rfloor$, $y_{g}=\lfloor y/\eta
\rfloor$.  For $l_1(x_1,y_1)$ and $l_2(x_2,y_2)$, their grid
coordinates are $(x_{g1},y_{g1})$ and $(x_{g2},y_{g2})$, respectively.


Let $d_x=|x_2-x_1|$, $d_y=|y_2-y_1|$, $d_{gx}=|x_{g2}-x_{g1}|$, and
$d_{gy}=|y_{g2}-y_{g1}|$.  Therefore, $d_x \in ((d_{gx}-1)\eta,
(d_{gx}+1)\eta)$ and $d_y \in ((d_{gy}-1)\eta, (d_{gy}+1)\eta)$.
According to triangle inequality, we have the following:
\vspace{-0.05in}
\begin{equation*} 
|d_x - d_y| <= dist(l_1,l_2) <= d_x+d_y 
\vspace{-0.05in}
\end{equation*}
Thus
\vspace{-0.05in}
\begin{equation*} 
(|d_{gx} - d_{gy}|-2)\eta <= dist(l_1,l_2) <= (d_{gx}+d_{gy}+2)\eta
\vspace{-0.05in}
\end{equation*}

We use the above inequalities to calculate the lower bound of the
distance between two locations.  If the lower bound is larger than
$TH_{dist}$, then we can avoid computing the actual distance, thereby
reducing computation overhead.

\Paragraph{Optimizing Time Computation}
A sub-partition contains events in a TRU.  For $|t_2-t_1| \leq
TH_{time}$ computation, we can use the sub-partition row key (which
encode the time range) to reduce data access and computation overhead.
Let $d_T = \lceil TH_{time}/\mbox{TRU} \rceil$.  If $t_1$ is in time
range $tr_1$, we only need to consider $2d_T+1$ sub-partitions with
time ranges in $[tr_1-d_T,tr_1+d_T]$.  In particular, when $TH_{time}
<$ TRU, we only need to consider three sub-partitions with time ranges
$tr_1-1$, $tr_1$, and $tr_1+1$.

\section{Experimental Evaluation}
\label{sec:evaluate}

In this section, we evaluate the performance of PAST.  We would like
to answer the following questions in the experiments:
\begin{list}{\labelitemi}{\setlength{\leftmargin}{5mm}\setlength{\itemindent}{0mm}\setlength{\topsep}{0.5mm}\setlength{\itemsep}{0mm}\setlength{\parsep}{0.5mm}}
\item Can PAST efficiently support high-throughput edge insertions
while achieving the desired graph partitions?
%
%
\item What are the benefits of the proposed techniques, such as the
diversified partition strategy, cost-based edge partition selection,
compressed edge storage, and computation optimization?
%
%
%
\item How does PAST compare to state-of-the-art systems (e.g.,
JanusGraph, Greenplum, Spark, and ST-Hadoop)?
%
%
\end{list}

\subsection{Experimental Setup}

\Paragraph{Machine Configuration} The experiments are performed on a
cluster of 11 machines.  Each machine is a Dell PowerEdge blade server
equipped with two Intel(R) Xeon(R) E5-2650 v3 2.30 GHz CPUs (10
cores/20 threads, 25.6MB cache), 128GB memory, a 1TB HDD and a 180GB
SSD, running 64-bit Ubuntu 16.04 LTS with 4.4.0-112-generic Linux
kernel.  The blade servers are connected through a 10Gbps Ethernet.
We use Oracle Java 1.8, Cassandra 2.1.9, JanusGraph 0.2.1, Greenplum
5.9.0, Spark 2.2.0, ST-Hadoop 1.2 and Hadoop 2.8.1 in our experiments.

\Paragraph{Workload Generation} 
We generate a synthetic data set of customer shopping events.  The
goal to support 10 billion object vertices and 10 million location
vertices are designed for clusters with 1000 machines.  Given the
machine cluster size in our experiments, we scale down the number of
vertices by a factor of 100.  Therefore, we would like to generate 100
million object vertices and 100 thousand location vertices.

First, we crawled 450 thousand hotel locations in China from ctrip.com
(a popular travel booking web site in China).  The locations are
distributed across about 1100 areas (cities / counties / districts).
We randomly choose 100 thousand locations from the real-world hotel
locations as shopping locations.
Second, we generate 100 million customers.  As the number of shopping
locations and the population of an area are often correlated, we
assign customers to areas so that the number of customers is
proportional to the number of locations in an area.  
Third, we would like to generate a data set that covers a 2-3 year
period and can be stored in the cluster in the experiments.  We assume
40\% people are frequent shoppers and 60\% people are infrequent
shoppers.  A frequent shopper and an infrequent shopper visit a
randomly chosen shopping location in his or her area every week with
probability 0.8 and 0.2, respectively.  Edge timestamps are used to
compute edge partitions and evaluate queries in our experiments.
(Note that the week period is necessary to reduce the total data
volume to fit into the cluster storage capacity.  Our ingestion
experiments will send the edge data as fast as possible to saturate
the system disregarding edge timestamps).
Finally, we produce a small number of cloned objects that visit
shopping locations in far-away areas.  The resulting graph contains 57
billion edges, which cover 800 days.

\Paragraph{Parameter Settings}  In spatio-temporal partitioning, we
apply a $1024 \times 1024$ grid to the map, obtaining 1048576 regions.
Note that there are more hotels in the city than in the countryside.
We see that hotels concentrate in 8660 regions.  In key-temporal
partitioning, we generate 16384 slots.  We set TRU to be 24 hours.

We evaluate Q1-Q4 as described in Section \cref{subsec:query}. By
default, we set $TH_{time} = 7h$, $TH_{dist}=100m$,
$TH_{velocity}=120km/h$.

\Paragraph{State-of-the-art Systems to Compare} We evaluate our
proposed solution, PAST, and four state-of-the-art systems in the
experiments:
\begin{list}{\labelitemi}{\setlength{\leftmargin}{5mm}\setlength{\itemindent}{0mm}\setlength{\topsep}{0.5mm}\setlength{\itemsep}{0mm}\setlength{\parsep}{0.5mm}}
\item \emph{PAST (our proposed solution)}.  We implement PAST in Java.
The GraphStore uses Cassandra as the underlying DB backend and stores
data on the HDDs.  We customize Cassandra's partitioner to manage the
key-to-node mapping.  
\item \emph{JanusGraph (a state-of-the-art distributed graph store)}.
A spatio-temporal graph is stored as a property graph. We choose
Cassandra as JanusGraph's storage backend.  To facilitate edge
retrieval for a given object (location) vertex, we set the vertex ID
to be the object ID (location ID).
\item \emph{Greenplum (a state-of-the-art MPP relational DB)}. Graph
data is stored as relational tables in Greenplum. To minimize disk
space usage, we create two tables for edge data and for location
details with latitude and longitude, respectively.  The two tables are
linked by location ID.  Q2, Q3 and Q4 perform join operations.  We
employ multi-dimensional partitioning for improving query performance.
Vertex ID is the first dimension, and time is the second dimension.  
\item \emph{Spark+Cassandra (a state-of-the-art big-data analytics
system)}. Graph data is stored in Cassandra.  Spark loads the location
data into memory at the beginning of execution to reduce the overhead
of looking up locations.  The loading time is less than 1 second, and
is negligible compared with the query execution time.  Spark accesses
data in Cassandra for query processing, and saves the query results to
HDFS.
\item \emph{ST-Hadoop+Spark (a big data system specially optimized for
spatio-temporal data)}. Given the TRU setting in PAST, we set
ST-Hadoop's partition granularity to be day.  PAST's spatio-temporal
partitioning employs the Z-order curve.  Therefore, we set ST-Hadoop's
spatial index technique to be the Z-order curve.  To avoid MapReduce's
overhead of storing intermediate data to disks, we use Spark to read
spatio-temporally indexed data in ST-Hadoop and compute the queries in
memory.
\end{list} 

\begin{figure*}[t]
  \vspace{-0.1in}
  \centering
  \subfloat
           {\includegraphics[width=7in]{./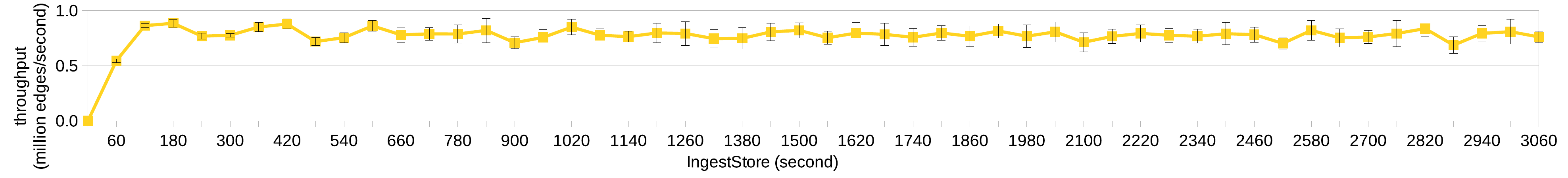}
           \label{fig:is}}
  \hfil
  \vspace{-0.1in}
  \subfloat
           {\includegraphics[width=7in]{./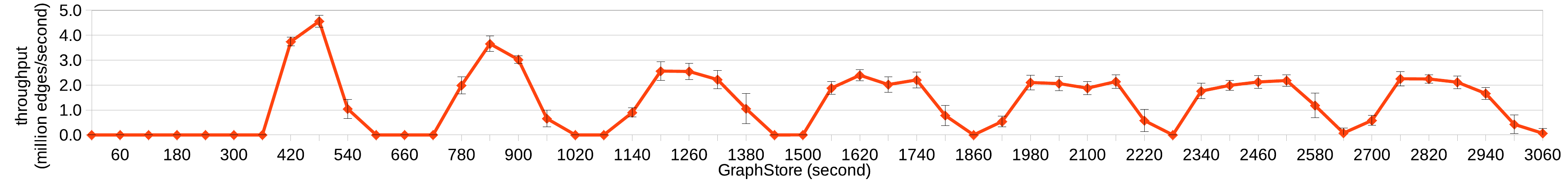}
           \label{fig:gs}}
  \vspace{-0.1in}
  \caption{Ingestion throughput seen at IngestStore and GraphStore
evolving over time
(\#workers=10).}
  \label{fig:throughput-over-time}
  \vspace{-0.1in}
\end{figure*}

\subsection{New Edge Ingestion}
\label{subsec:exp-ingestion}

In this section, we measure the sustained edge ingestion throughput 
in PAST, and examine the distribution of data across workers.

As described in Section~\cref{sec:stream}, PAST shuffles $m$-TRU worth
of data in every round.  In our experiments, we set $m$ to be the
number of worker machines.  Note that edges are partitioned during
ingestion.

\begin{figure}[t]
  \vspace{-0.1in}
  \centering
  \includegraphics[width=3in]{./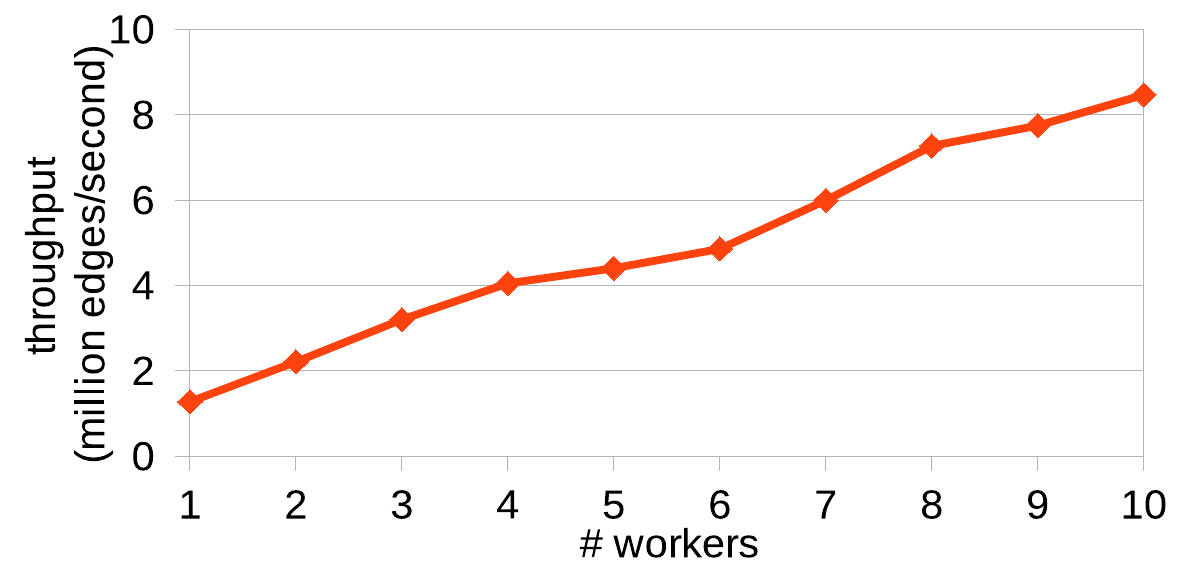}
  \vspace{-0.15in}
  \caption{Ingestion throughput varying the number of workers.}
  \label{fig:throughput}
\vspace{-0.2in}
\end{figure}

\Paragraph{Edge Ingestion Throughput and Scalability}
The edge ingestion throughput is the number of new edges streamed into
all the GraphStores per second. We send edges in the data set as fast
as possible in this set of experiments.  We observe that the ingestion
throughput stabilizes in the third round, as shown in
Fig.\cref{fig:throughput-over-time}.  Given this, we measure the
ingestion throughput in the fifth and sixth rounds at IngestStores,
and report the aggregate ingestion throughput as the sustained
throughput in Fig.\cref{fig:throughput}.

In Fig.\cref{fig:throughput}, we vary the number of worker machines on
the X-axis from 1 to 10 to study the scalability of our solution.  The
Y-axis reports the sustained ingestion throughput in million edges per
second.  From the figure, we see that the sustained ingestion
throughput increases nearly linearly as the number of worker grows.
The PAST design achieves good scalability for new edge ingestion.

Every worker in PAST can support an additional 0.85 million new edges
per second.  An edge takes 24 bytes in this experiment.  Every worker
in PAST can support additional 20MB/s ingestion bandwidth.  Therefore,
the design goal of 1 trillion new edges per day (or 290MB/s) for a
full-scale spatio-temporal graph can be achieved with about 15 worker
machines.  This gives a lower bound of the actual number of nodes in a
design, whose choice must also consider the performance of query
processing.

\Paragraph{Edge Ingestion Throughput Evolving Over Time} We measure
the number of ingested edges seen by both IngestStores and GraphStores
at every minute for about 50 minutes.
Fig.\cref{fig:throughput-over-time} shows the average per-machine
throughput across all machines.  The X-axis is wall-clock time.  The
Y-axis is the ingestion throughput in million edges per second.  The
error bars show the standard errors.

The upper figure in Fig.\cref{fig:throughput-over-time} shows the
ingestion throughput seen by IngestStores.  IngestStores begin to
handle incoming edges at time 0. The ingestion throughput quickly
increases at the beginning.  Then it fluctuates around 0.85 million
new edges per second.

The lower figure in Fig.\cref{fig:throughput-over-time} shows the
ingestion throughput seen by GraphStores.  GraphStores begin to
receive data at time=360s. Its throughput reaches the peak with nearly
108MB/s (4.5 million edges per second) at time=480s.  This is because
Cassandra starts with empty in-memory buffers for receiving data, and
the cost of storing to Cassandra is very low at the beginning.  As
time goes by, Cassandra moves into more steady states.  The peak
throughput reduces to 86MB/s (3.6 million edges per second) at the
second round.  The peak throughput in stable rounds (i.e. round 3 and
beyond) is about 48MB/s (2.0 million events/s).

Note that IngestStores send both key-temporal data and spatio-temporal
data.  The shuffled data is essentially twice the amount of the
incoming edge data.  Therefore, the ingestion throughput seen at
GraphStores is roughly twice as much as that seen at IngestStores.

\Paragraph{Data Distribution} We would like to understand how well our
partitioning methods work in terms of data distribution across
machines.  We measure the data size in each worker for both
spatio-temporal edge partitions and key-temporal edge partitions.
In Fig.\cref{fig:data_amount}, the X-axis shows each worker machine,
while the Y-axis reports data size in GB.  From the figure, we see
that PAST's partitioning methods achieve balanced data distribution.
Note that the data size of spatio-temporal partitions is roughly twice
as large as the data size of key-temporal partitions.  This is because
PAST stores three replicas for edge data, of which two in
spatio-temporal partitions, and one in key-temporal partitions.

\begin{figure}[t]
  \vspace{-0.1in}
  \centering
  \includegraphics[width=3.25in]{./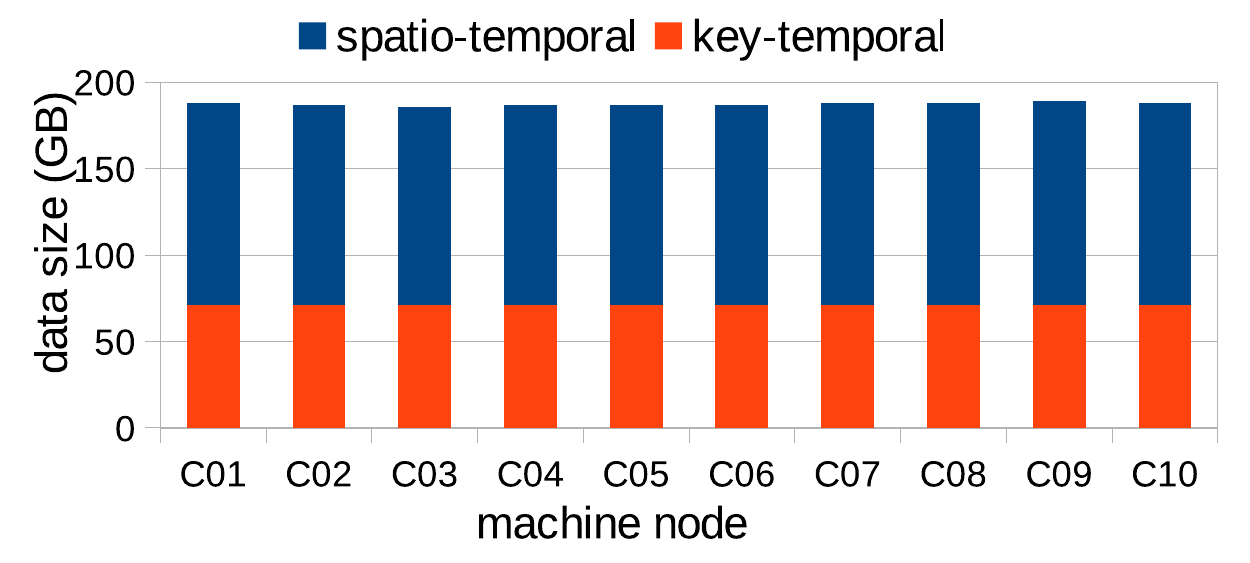}
  \vspace{-0.15in}
  \caption{Data distribution in different edge partitions.}
  \label{fig:data_amount}
\vspace{-0.20in}
\end{figure}

\subsection{Proposed Features in PAST}
\label{subsec:exp-past}

In this section, we evaluate the proposed features in PAST, including
storage format, edge partition selection, computation optimization,
and the spatial mapping algorithms.

\subsubsection{Comparison of Two Storage Formats}
\label{subsec:exp-storage}

We evaluate the two storage formats as described in
Section~\cref{subsec:store}: (i) $C$: this is the baseline design,
where edge properties are stored in separate tables (column families)
with compression; and (ii) $R$: this is the PAX design, where the
compressed columns are concatenated to store in a single table (column
family).  Since they both store compressed column data, $C$ and $R$
consume the same amount of disk space.  However, their query
performance can be different.

\begin{figure}[h]
  \vspace{-0.1in}
  \centering
  \includegraphics[width=3.2in]{./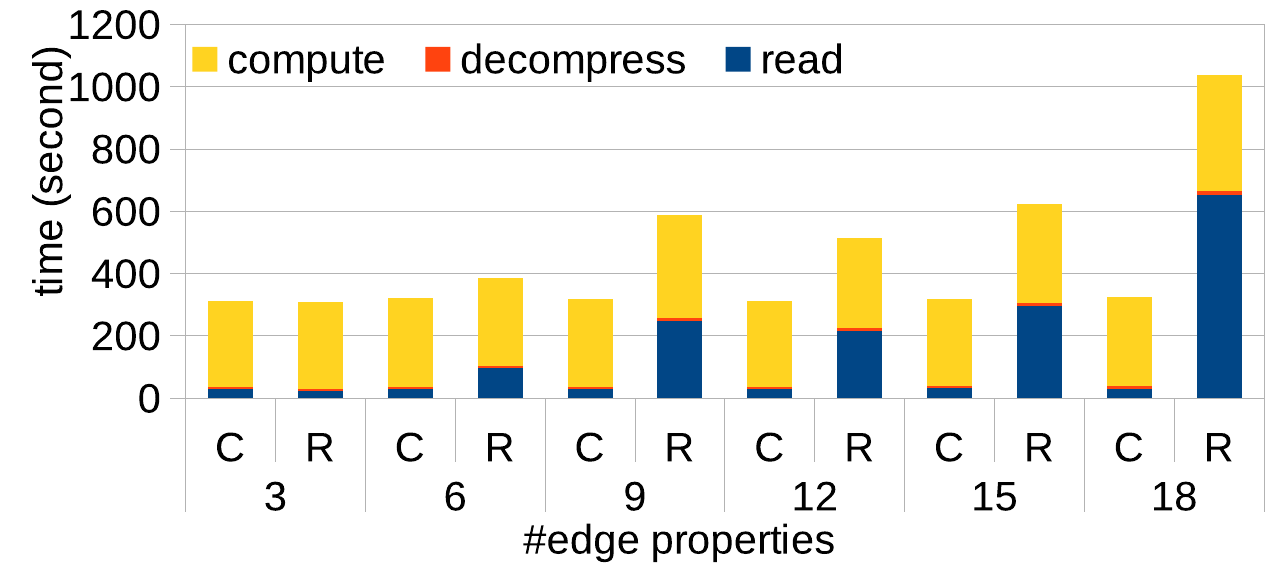}
  \vspace{-0.15in}
  \caption{Effect of storage format for Q4.}
  \label{fig:effect-storage}
\vspace{-0.1in}
\end{figure}

Fig.\cref{fig:effect-storage} reports query execution time for Q4 while
varying the number of edge properties.  Note that Q4 uses only three
edge properties (i.e., object ID, timestamp, location ID) in
computation.  We perform the experiment on a single machine.  The
X-axis shows the total number of edge properties, and the Y-axis is
elapsed time.  

From the figure, we see that the cost of decompression is negligible.
When there are three edge properties, $C$ and $R$ take nearly the same
time for reading data from Cassandra because they retrieve the same
amount of data from disk. However, as the number of properties
increases beyond three, the query performance of $R$ deteriorates.
The read time in $R$ increases as the number of properties increases.
This is because $C$ avoids accessing irrelevant properties, while $R$
has to retrieve all properties.

\subsubsection{Benefits of Cost-Based Edge Partition Selection}
\label{exp:effect-replica}

We perform a set of experiments to validate the analysis in
Section~\cref{subsec:cost-plan}.  Here, we evaluate Q1, which
represents simple graph traversal queries, and Q3, which represents
complex queries, using different edge partitions.  The query time
range is one day.

Table~\cref{tab:effect-replica} lists the execution time and accessed
data size for Q1 and Q3 using the following solutions. (i) ST: a query
accesses only spatio-temporal partitions.  (ii) KT: a query accesses
only key-temporal partitions.  (iii) KT+ST for Q3: the optimized
execution strategy for Q3, where PAST first accesses key-temporal
partitions for retrieving the trace of the given object $o$, then
accesses spatio-temporal partitions for computing objects that are
similar to $o$.

From Table~\cref{tab:effect-replica}, we see that for Q1, $Cost_{KT} <
Cost_{ST}$, and for Q3, $Cost_{ST+KT} < Cost_{KT} < Cost_{ST}$.  This
observation is in accordance with the analysis based on the cost model
in Section~\cref{subsec:cost-plan}.  Overall, the best execution plans
for Q1 and Q3 achieve a factor of 39.5x and 9.7x improvements over the
second best plans.

\begin{table}[h]
  \centering
\caption{Query performance using different edge partitions.}
\vspace{-0.1in}
\small
\begin{tabular}{|c|cc|cc|}
\hline
           & \multicolumn{2}{c|}{Q1}  &  \multicolumn{2}{c|}{Q3}       \\
\cline{2-3} \cline{4-5}
replica    & size (MB) & time (s)    & size (MB) & time (s)          \\
\hline
KT         &   0.058   &  \textbf{0.47}   &  950      &   343.92          \\
ST         &   950     &  18.57           &  950      &   636.52          \\
KT+ST      &   --      &  --              &  9        &   \textbf{35.63}  \\
\hline
\multicolumn{5}{l}{Note: only Q3 can be optimized using both two edge partitions.}
\end{tabular}
\label{tab:effect-replica}
\vspace{-0.15in}
\end{table}

\subsubsection{Benefits of Computation Optimization}
\label{subsec:exp-opt}

We study the effect of optimization techniques for evaluating edge
similarity, as described in Section~\cref{subsec:edge-similarity}.  We
measure Q3's execution time while varying the query time range from 1
day to 512 days.  We compare three solutions: (i) ST-OP: the baseline
implementation in PAST with both location and time computation
optimized; (ii) T-OP: only time computation is optimized; (iii) NOP:
there is no computation optimization.

\begin{table}[h]
\vspace{-0.1in}
\caption{Query results varying time range for Q3.}
\vspace{-0.25in}

\setlength{\tabcolsep}{3pt}

\begin{center}
\small
\begin{tabular}{|c|c|c|c|c|c|c|c|c|c|c|}
\hline
\#day 	           & 1&	2&	4&	8	&16	&32	&64	&128&	256	&512 \\
\hline
read(MB)           &9 &34 &34 & 89 &221 &456 &963 & 2000 & 4005& 8134 \\
\hline
\#result($\times10^3$) & 3&	7&	7&	13	&27	&45	&87	&174&	333	&566 \\
\hline                                               
\end{tabular} \label{tab:effect-algo}
\end{center}
\vspace{-0.3in}
\end{table}

\begin{figure}[h]
  \vspace{-0.1in}
  \centering
  \includegraphics[width=2.7in]{./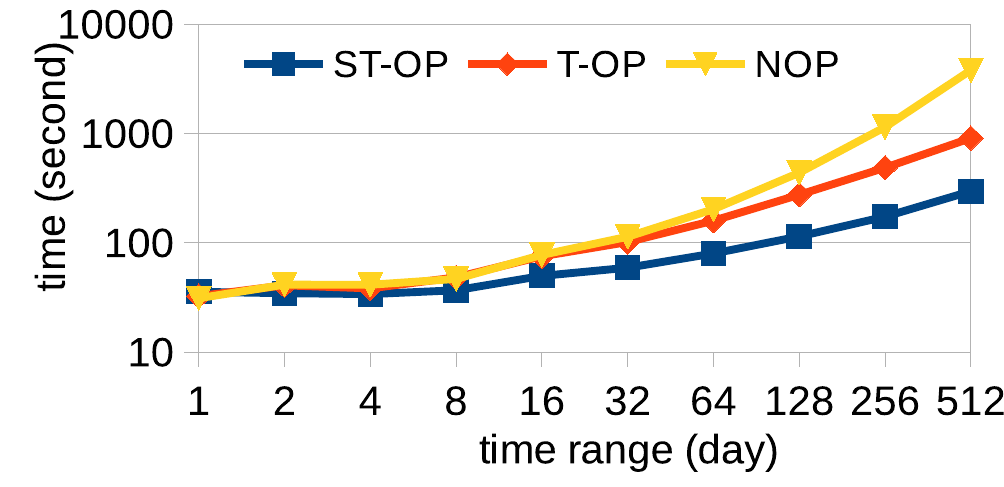}
  \vspace{-0.15in}
  \caption{Effect of algorithm optimizations for Q3.}
  \label{fig:effect-algo}
  \vspace{-0.1in}
\end{figure}

Fig.\cref{fig:effect-algo} reports query execution time in the
logarithmic scale.  Table~\cref{tab:effect-algo} lists the read data
size and the number of query results for all the experiments in
Fig.\cref{fig:effect-algo}.  When the query time range is less than 4
days, the three solutions take nearly the same time.  As the query
time range increases, the performance difference becomes obvious.
Overall, ST-OP outperforms T-OP and N-OP by a factor of up to 3.1x and
12.8x.  The computation optimization is quite significant.

\subsubsection{Effect of Spatial Mapping for Skewed Data}
\label{subsec:exp-skew}

\begin{figure}[t]
 \vspace{-0.1in}
  \centering
  \subfloat[Workload distribution]
           {\includegraphics[width=2in]{./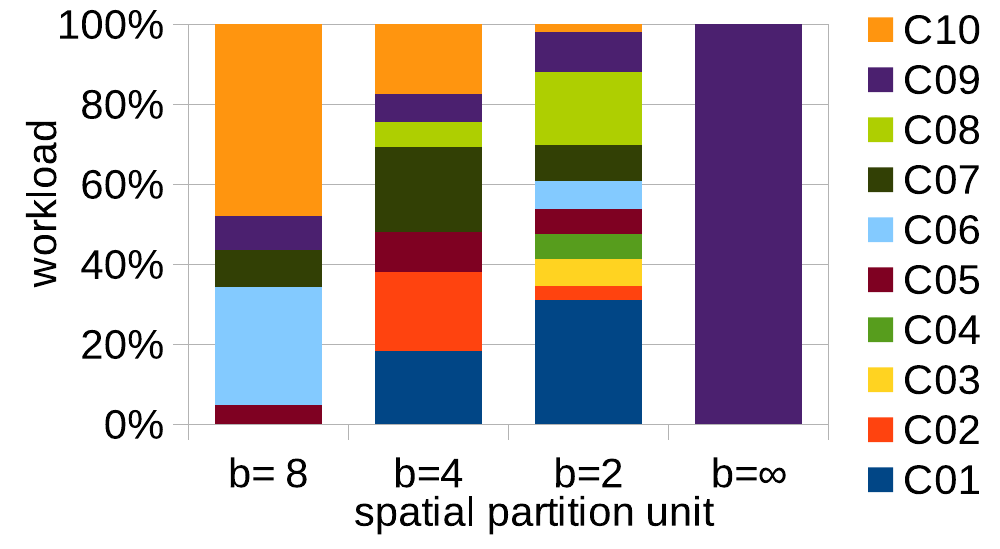}
           \label{fig:dist}}
  \hfil
  \subfloat[Runtime]
           {\includegraphics[width=1in]{./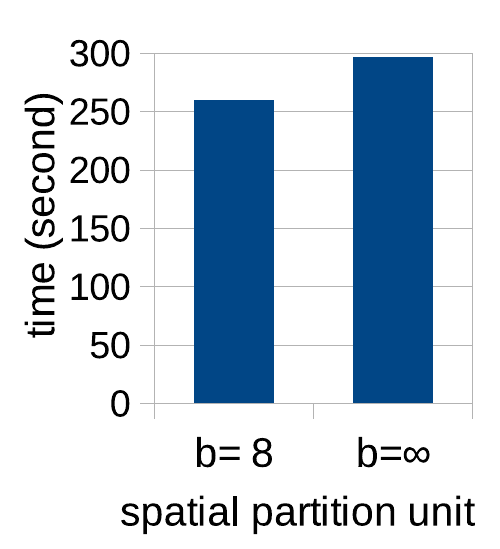}
           \label{fig:runtime}}
  \vspace{-0.1in}
  \caption{Effect of skew spatial data partitioning for Q3.}
  \label{fig:effect-skew}
  \vspace{-0.1in}
\end{figure}

We compare unbounded spatial mapping (Alg.\cref{algo:sp-unbounded}) and
bounded spatial mapping (Alg.\cref{algo:sp-bounded}) using Q3.  We
consider three choices of b (i.e., 8, 4, and 2) for
Alg.\cref{algo:sp-bounded}.  According to Theorem~2, the corresponding
lower bound of SF (spatial factor) is 88\%, 77\% and 56\%,
respectively.  (we set region width $a$ to be $D$).  The unbounded
algorithm is labeled by $b=\infty$.  PAST employs the optimized
execution plan for Q3 as described in Section~\cref{subsec:cost-plan}.

Fig.\cref{fig:dist} shows the distribution of accessed regions for each
machine when the query time range is 512 days.  The smaller the SF is,
the more machines are likely to contain regions used in the query.
However, we would like to choose $b$ such that $b$ is not very small
and data locality is maintained for most location computation.
Fig.\cref{fig:runtime} shows the execution time for $b=8$ and
$b=\infty$.  We see that the bounded algorithm achieves a small
improvement of 10\% over the unbounded algorithm.

\subsection{Comparison with State-of-the-art Systems}
\label{subsec:exp-compare}

We compare PAST with state-of-the-art systems, including JanusGraph,
Greenplum, Cassandra+Spark, and ST-Hadoop+Spark.  We are interested in
two aspects: (i) storage space consumption and (ii) query performance.

\subsubsection{Storage Space}
\label{subsec:exp-space}

Table~\cref{tab:comparison-data-size} shows storage space used in all
systems\footnote{\small Data ingestion times for the systems are as
follows.  PAST: 6h, JanusGraph: 20h, Greenplum: 16h, Cassandra: 46h,
ST-Hadoop: 12h.}.  Note that PAST and Cassandra keep 3 replicas, and
Greenplum stores 2 replicas by default.  Due to disk space limitation,
JanusGraph, Greenplum, and ST-Hadoop stores only 1 replica.

\begin{table}[h]
\vspace{-0.05in}
\caption{Disk Space Consumption}
\vspace{-0.25in}

\setlength{\tabcolsep}{3pt}

\begin{center}
\small
\begin{tabular}{|c|c|c|c|c|c|}
\hline
          & PAST & JanusGraph & Greenplum & Cassandra & ST-Hadoop \\
\hline
size (TB) & 1.9  &  3.4       & 2.8       &  3.2      & 3.1 \\
\hline 
\#replica & 3    &  1         & 2         &  3        & 1   \\
\hline 
\end{tabular} \label{tab:comparison-data-size}
\end{center}
\vspace{-0.20in}
\end{table}

\begin{figure*}
\centering
$
\begin{array}{c}

  \includegraphics[width=6in] {./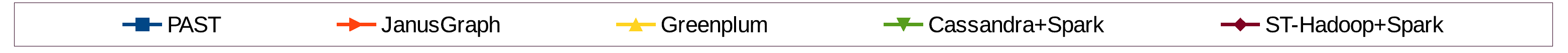} \\

  \begin{array}{cccc}

  \includegraphics[width=1.6in] {./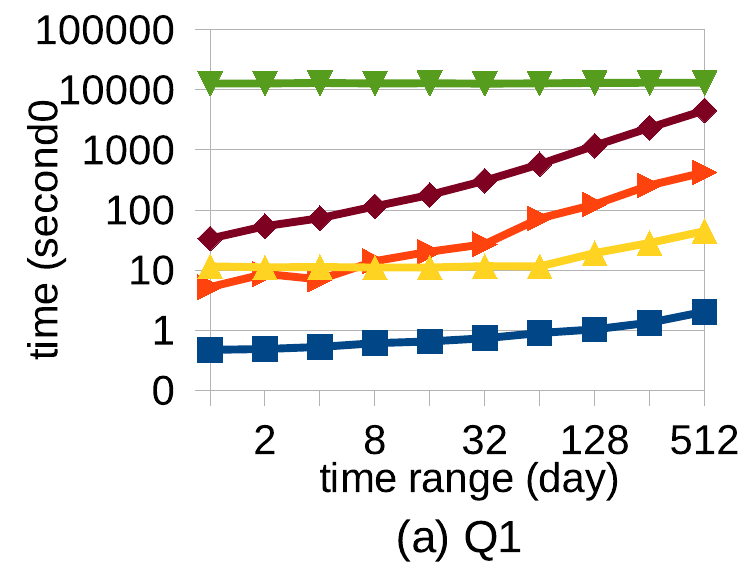}  &

  \includegraphics[width=1.6in]{./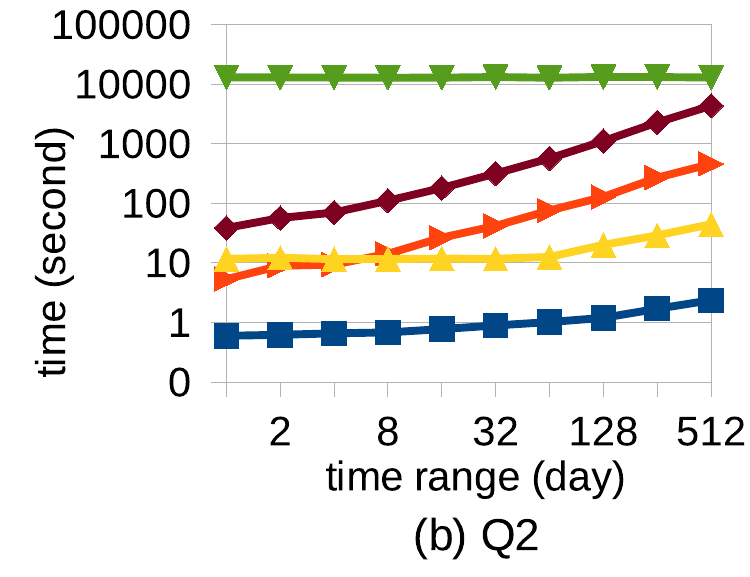}   &

  \includegraphics[width=1.6in]{./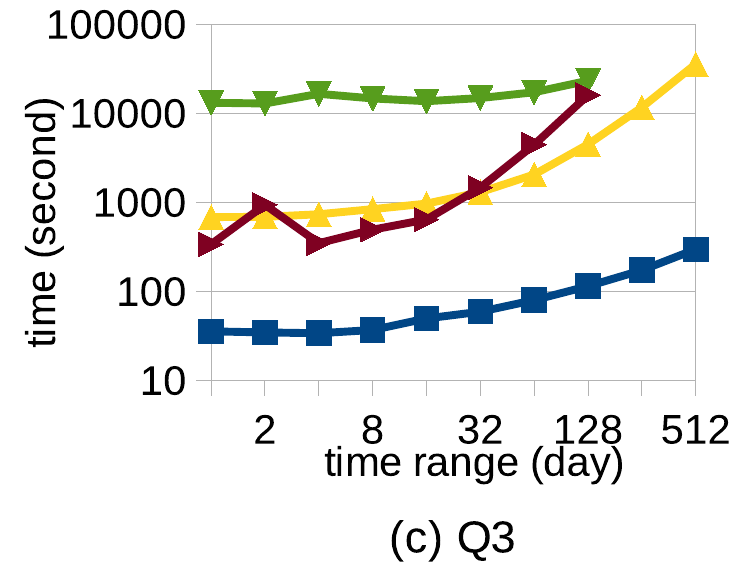}   &
	
  \includegraphics[width=1.6in]{./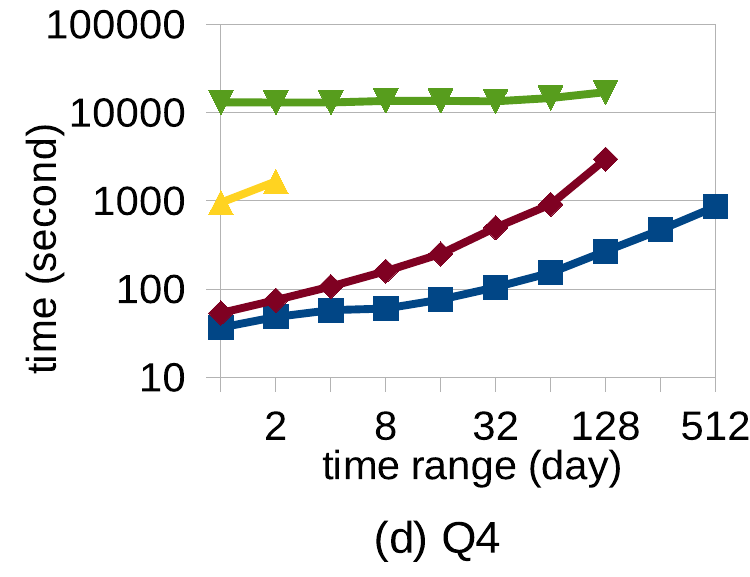} 

  \end{array}
\end{array}
$
  \vspace{-0.2in}

  \caption{Query performance comparison ($TH_{time}$ = 7h, $TH_{dist}$ = 100m). }
  \label{fig:comparison-sel}

  \vspace{-0.2in}

\end{figure*}

From the table, we see that compared with Cassandra, PAST achieves a
factor of 1.7x space savings.   PAST consumes less space than the
other systems even if they store fewer number of replicas.  If we
calculate the space consumption for JanusGraph, GreenPlum, and
ST-Hadoop to account for three replicas, then PAST reduces the space
consumption of JanusGraph, GreenPlum, and ST-Hadoop by a factor of
5.4x, 2.2x, and 4.9x, respectively.

Note that JanusGraph stores the graph data in Cassandra.  Both
Cassandra and JanusGraph employ LZ4 compression.  In comparison, PAST
achieves much better space consumption.  This is because PAST utilizes
columnar layout for all edges in a sub-partition, and compresses the
columnar edge properties.  JanusGraph consumes more space than
Cassandra because it stores each edge twice at both the incoming
vertex and the outgoing vertex.  PAST and Cassandra take less space to
store one replica than Greenplum because of compression.

\subsubsection{Query Performance}
\label{subsec:exp-query}

Fig.\cref{fig:comparison-sel} and Fig.\cref{fig:comparison-sel2} compare
the query performance for all systems, while varying the query time
range and the threshold values, i.e. $TH_{time}$ and $TH_{dist}$.  The
Y-axis is executing time in the logarithmic scale.  We do not run Q3
and Q4 on JanusGraph as it mainly focuses on simple traversal queries
and employs Spark for complex queries.  Therefore, Q3 and Q4 on
JanusGraph can be represented by Q3 and Q4 on Cassandra+Spark,
respectively.

There are several missing points in Fig.\cref{fig:comparison-sel}(c)
and (d).  The experiments corresponding to the missing points run over
one day and have not completed.  Cassandra+Spark and ST-Hadoop+Spark
are overwhelmed by shuffling for Q3 and Q4 with large query time
ranges.   For Q4, most systems compute the velocity of edges of an
object in sorted time order.   In contrast, in GreenPlum, the SQL
query for Q4 would read and join all data for computing velocity,
which quickly overwhelm the system.
 
\begin{figure}
\centering
$
\begin{array}{c}

  \includegraphics[width=3.0in] {./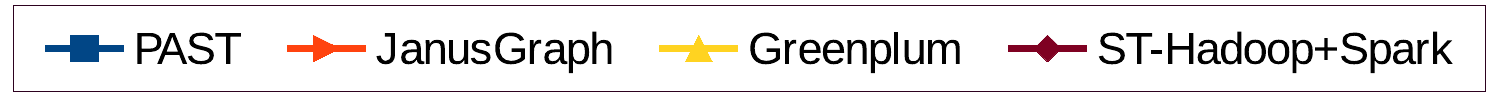} \\

  \begin{array}{cc}

  \includegraphics[width=1.4in] {./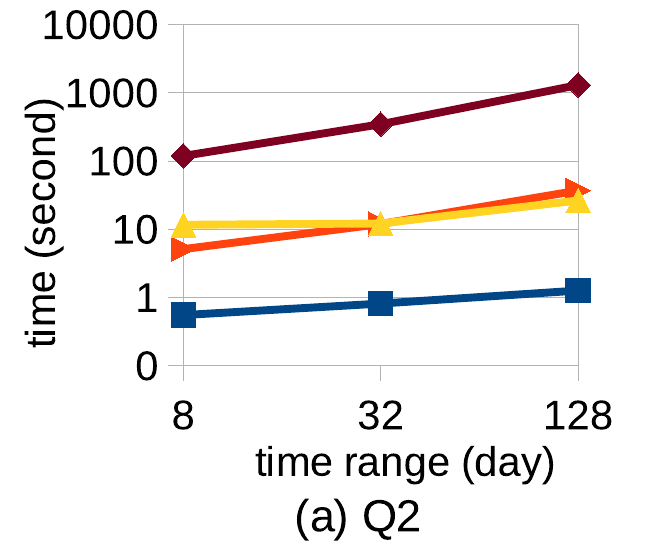}  &

  \includegraphics[width=1.4in]{./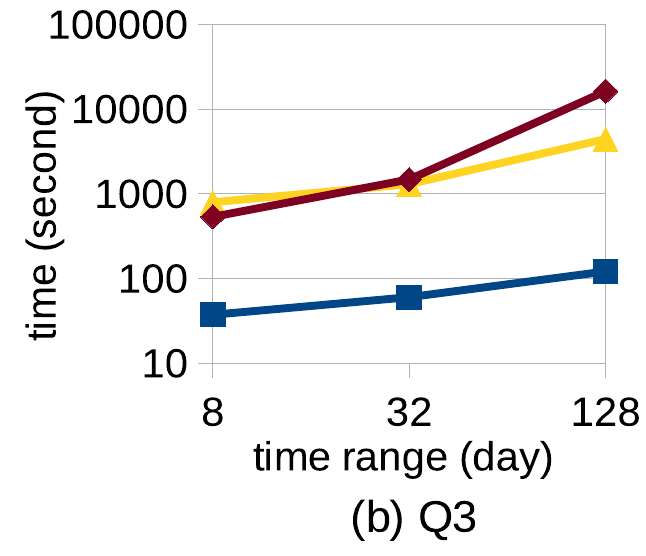}   \\

  \end{array}
\end{array}
$
  \vspace{-0.2in}

  \caption{Query performance comparison ($TH_{time}$ = 10h, $TH_{dist}$ = 1000m).}
  \label{fig:comparison-sel2}

  \vspace{-0.2in}

\end{figure}

From Fig.\cref{fig:comparison-sel} and Fig.\cref{fig:comparison-sel2},
we see that PAST achieves 1--4 orders of magnitude better performance
compared with the four existing solutions.  The partition and query
processing schemes in PAST can effectively reduce the amount of data
accessed from the underlying storage and the data communication cost.
The main bottleneck of Cassandra+Spark is the disk I/Os for scanning
all data.
ST-Hadoop+Spark performs better than Cassandra+Spark because it
exploits the spatio-temporal index to reduce the amount of data to
read.  However, as the query time range increases, ST-Hadoop+Spark's
performance degrades and approaches that of Cassandra+Spark.
Greenplum achieves significantly better performance than
ST-Hadoop+Spark for the simple queries, Q1 and Q2. This is because
Greenplum first partitions by object then further sub-partitions by
time.  The object partitions fit the needs of Q1 and Q2 well.  In
contrast, there is no object indexing support in ST-Hadoop.

\section{Discussion}
\label{sec:discussion}

\Paragraph{Medium-grain Partitions for Object Vertices} Our partitioning scheme
for object vertices is based on hash partitioning.  However, if we know more
about the object vertices, we may design better partitioning schemes.  Note
that too fine grained partitions result in the difficulty in recording the
partitions and the significant expense in computing the partitions.  Therefore,
we consider medium-grain partitioning based on groups of vertices.  We can add
a group property to every vertex to record the group of the vertex.  Then the
partition decision is based on groups rather than every vertex. The group to
machine assignment can be captured in the coordinator node, who also keeps
track of the group statistics. 
The assignment of vertices to groups is application dependent. For example, it
would be nice if people with similar behaviors are assigned to the same group.

\Paragraph{Dynamic Vertex Group Partitions} Every worker node periodically 
reports group statistics to the coordinator machine.  Based on the 
statistics, the coordinator machine computes the average (space and 
computation) loads per worker machine.  It will ask worker nodes to migrate 
vertex groups if the load of a machine is beyond a threshold (e.g.,
+/- 5\% of the average).  The vertex groups to be migrated can be computed 
based on the collected statistics.

\Paragraph{Alternative Implementation as Index on RDBMSs} We present a
stand-alone implementation of PAST in this paper.  Alternatively, we can
implement PAST as a spatio-temporal index structure on top of RDBMSs.  Then
PAST can be applied to more applications that combine both spatio-temporal
graphs and traditional relational data.  PAST speeds up the part of queries
that involve spatio-temporal graphs and returns a list of vertex IDs or edge
IDs.  Then the list can be combined with query outputs from relational queries
to further compute the final query results.

\Paragraph{More Extended Applications} The idea of PAST can be applied to slove
more generalized multi-dimension graph problems, not limited by spatio-temporal
dimension. Moreover, the concepts of space can be generalized as user space,
commodity space, etc, and in the meanwhile, the distance can not only be
Euclidean distance but also other similarity measurement.


\section{Conclusion}\label{sec:conclusion}

In conclusion, we define a bipartite graph model for spatio-temporal graphs
based on the commonalities of representative real-world applications, i.e.,
customer behavior tracking and mining, clone-plate car detection, and shipment
tracking.   We propose and evaluate PAST, a framework for efficient
\underline{PA}rtitioning and query processing of
\underline{S}patio-\underline{T}emporal graphs.  Our experimental evaluation
shows that PAST can meet the requirements of the above applications.  Our
proposed partitioning and storage methods and algorithm optimizations achieve
significant performance improvements.  For typical queries on spatio-temporal
graphs, PAST can outperform state-of-art systems (e.g., JanusGraph, Greenplum,
Spark, ST-Hadoop) by 1--4 orders of magnitude.


%

\balance

\clearpage

\bibliographystyle{abbrv}
\bibliography{manuscript}

\end{document}